\documentclass{article}

\PassOptionsToPackage{round}{natbib}
\usepackage[preprint]{neurips_2026}

\usepackage{framed}
 
\usepackage{amssymb,amsthm,amsmath}
\usepackage{mathtools,mathrsfs}
\usepackage[ruled,linesnumbered]{algorithm2e}

\usepackage{bm}
\usepackage{float}
\usepackage{subfig}

\usepackage{multirow}

\usepackage[utf8]{inputenc} 
\usepackage[T1]{fontenc}    
\usepackage[
    colorlinks=true,
    linkcolor=blue,
    citecolor=blue,
    urlcolor=blue,
    filecolor=blue,
    bookmarks=true,
    bookmarksopen=true,
    bookmarksnumbered=true,
    pdfstartview=FitH,
    breaklinks=true
]{hyperref}
\usepackage{url}            
\usepackage{booktabs}       
\usepackage{amsfonts}       
\usepackage{nicefrac}       
\usepackage{tcolorbox}
\usepackage{microtype}      
\usepackage{pgfplots}
\usepgfplotslibrary{fillbetween,groupplots}
\usetikzlibrary{calc}
\usepackage{xcolor}         

\definecolor{tcpblue}{HTML}{4A9BD3}

\SetCommentSty{tcpstyle}
\usepackage{adjustbox}
\definecolor{plotblue}{HTML}{365F9C}
\definecolor{plotred}{HTML}{CB4149}
\definecolor{plotteal}{HTML}{168F91}
\definecolor{plotpurple}{HTML}{7C4DA5}
\pgfplotsset{compat=1.18,evalue axis/.style={
    font=\fontsize{7}{8}\selectfont,
    xlabel style={font=\fontsize{7}{8}\selectfont},ylabel style={font=\fontsize{7}{8}\selectfont},
    title style={font=\fontsize{7}{8}\selectfont},tick label style={font=\fontsize{7}{8}\selectfont},
    grid=none,axis line style={line width=.4pt},tick style={line width=.4pt},
    tick align=outside,tick pos=left,scaled ticks=false,
    /pgf/number format/fixed,/pgf/number format/precision=3,
    legend style={font=\fontsize{7}{8}\selectfont,draw=none,fill=none,
        /tikz/every even column/.append style={column sep=2pt}},
    unbounded coords=jump,
}}

\definecolor{deepyellow}{RGB}{204,153,0}

\usepackage[capitalize,noabbrev]{cleveref}
\newtheorem{theorem}{Theorem}[section]
\newtheorem{lemma}[theorem]{Lemma}

\newtheorem{proposition}[theorem]{Proposition}

\newtheorem{definition}[theorem]{Definition}

\crefname{assumption}{assumption}{assumptions}
\Crefname{assumption}{Assumption}{Assumptions}

\title{Improved E-Value Thresholds with Applications to Multiple and Sequential Testing}

\author{%
{\bfseries
Jiayi Shen$^{1}$\thanks{Equal contribution.} \quad
Houzhen Liu$^{1}$\footnotemark[1] \quad
Qirun Zeng$^{2}$ \quad
Manhin Poon$^{2}$} \\[0.2em]
{\bfseries
Weiqi Yang$^{1}$ \quad
Qixin Zhang$^{3}$ \quad
Weiwei Zhuang$^{1}$} \\[0.7em]
{\small
$^{1}$University of Science and Technology of China, Hefei, China} \\[-0.05em]
{\small
$^{2}$City University of Hong Kong, Hong Kong SAR, China} \\[-0.05em]
{\small
$^{3}$Nanyang Technological University, Singapore} \\[0.4em]
{\small
\texttt{\{ustcsjy2021, liuhz0919, ywqyw123\}@mail.ustc.edu.cn}} \\[-0.05em]
{\small
\texttt{\{qirun.zeng,manhinpoon4-c\}@my.cityu.edu.hk}} \\[-0.05em]
{\small
\texttt{qixin.zhang@ntu.edu.sg}
\quad
\texttt{weizh@ustc.edu.cn}}
}

\begin{document}

\maketitle

\begin{abstract}
{E-values provide a flexible framework for statistical inference, but the universal threshold $1/\alpha$ can be conservative when the null distribution has additional structure. We study how structural restrictions limit the worst-case concentration underlying Markov's inequality, using a nondecreasing density model to constrain tail allocation and an $L$-Lipschitz density model to control local concentration. The nondecreasing model gives to the minimax rejection threshold, and the Lipschitz condition yields a sharper closed-form threshold with an $O(L^{-1/2})$ relative improvement over $1/\alpha$. We further show that this threshold is asymptotically sharp as $\alpha\to0$, in the sense that no uniformly valid threshold can improve on it by a fixed positive $L$-dependent amount. We then incorporate the Lipschitz calibration into multiple testing procedures with FDR control and use the same structural information to construct calibrated conditional e-values for anytime-valid sequential inference. Numerical experiments illustrate the resulting gains in discoveries and the trade-offs of the sequential procedures.}
\end{abstract}


\section{Introduction}
\label  {sec:introduction}
An e-value is a nonnegative random variable $E$ satisfying $\mathbb E_0[E]\leq1$ under the null hypothesis; large values indicate evidence against the null~\citep{vovk2021values}. This expectation-based criterion provides a framework for constructing and combining statistical evidence on a common scale. E-values arise in likelihood-ratio and Bayes-factor testing~\citep{shafer2011test} and have also been studied for composite hypothesis testing~\citep{zhang2024existence}. Their use extends from individual hypotheses to collections of decisions: e-BH controls the false discovery rate (FDR) under arbitrary dependence~\citep{wang2022false}, and e-value-based methods provide confidence regions and bounds on false discoveries~\citep{vovk2023confidence}. Applications include online multiple testing~\citep{xu2024online} and knockoff-based variable selection~\citep{ren2024derandomised}. When observations arrive sequentially, conditional e-values and e-processes provide evidence that remains valid under optional stopping and continuation~\citep{grunwald2024safe,ramdas2023game}, with nonnegative supermartingales supplying time-uniform error guarantees~\citep{howard2020time}. In these settings, the evidence can be used as data accumulate, without fixing the stopping time in advance. Across these uses, statistical decisions depend on how the evidence is calibrated against a rejection threshold.
A central issue in applications of e-values is that the universal rejection threshold $1/\alpha$ may be conservative for structured null distributions. This threshold follows from Markov's inequality: $\mathbb P(E\ge 1/\alpha) \le \alpha$. The bound is attained by placing probability $\alpha$ at $1/\alpha$ and the remaining mass at zero, a construction that uses only the mean constraint and allows probability to concentrate in the most favorable way. Additional assumptions on the null distribution can therefore rule out or weaken this worst-case behavior and permit smaller valid thresholds. Recent work~\citep{blier2024improved} shows that restrictions such as decreasing or unimodal null densities can indeed sharpen the standard Markov calibration.

This naturally motivates us to ask:
\begin{tcolorbox}[
    colback=blue!4,
    colframe=blue!45!black,
    boxrule=0.8pt,
    arc=1.5pt,
    left=5pt,
    right=5pt,
    top=5pt,
    bottom=5pt
]\emph{How much can additional structural information sharpen the rejection threshold}? 
\end{tcolorbox}To answer this question, we focus on two features underlying the Markov worst case: placing tail mass near the rejection threshold and concentrating probability mass in sharp local spikes. These two features suggest two complementary ways of restricting the null distribution. More specifically, a nondecreasing density constrains how tail mass can be distributed, whereas an $L$-Lipschitz density limits how sharply probability mass can concentrate locally. The resulting thresholds show how these two forms of structure lead to different degrees of improvement over the Markov calibration.

We further extend the Lipschitz calibration to multiple testing and sequential inference. In multiple testing, the improved threshold is incorporated into e-BH procedures with FDR control. In sequential inference, Lipschitz information is used to calibrate a new conditional e-value sequence whose product forms an anytime-valid e-process. Simulations study the resulting gains in multiple testing, while a retrospective study of S\&P~500 constituents examines the behavior of the sequential construction and its detection trade-offs.  To summarize, we make the following contributions:
\begin{itemize}
    \item We analyze a model on a bounded interval with an atom at zero and a nondecreasing density on the positive part, and derive its exact minimax rejection threshold.

    \item Under an $L$-Lipschitz constraint on the null density, we derive the sharper threshold $T_{\alpha}^{(L)} = \frac{1}{\alpha}\left(1 - \frac{\sqrt{2}(1 - \alpha)^{3/2}}{3\sqrt{L}}\right)$, yielding an explicit relative reduction of order $L^{-1/2}$ from the standard threshold $1/\alpha$. We further establish small-$\alpha$ asymptotic sharpness, showing that this threshold cannot be uniformly lowered by any fixed positive $L$-dependent additive margin.

    \item We incorporate the improved thresholds into e-BH procedures and establish FDR control under independence. An e-value boosting construction provides FDR control under arbitrary dependence.

    \item We extend the Lipschitz framework to anytime-valid sequential inference through new evidence updates, yielding an LC-process with time-uniform error control.

    \item Numerical experiments evaluate the discovery gains from Lipschitz calibration in multiple testing and its detection behavior in sequential inference, including controlled Gaussian experiments and a retrospective S\&P 500 study.
\end{itemize}

\subsection{Related work}

E-values have been developed as a general measure of statistical evidence, with close connections to likelihood ratios, Bayes factors, test martingales, and game-theoretic statistics~\citep{shafer2011test,vovk2021values,ramdas2023game}. Recent work has further studied the construction and power of e-values for composite hypotheses~\citep{zhang2024existence}, as well as broader frameworks for safe and anytime-valid inference based on e-values and e-statistics~\citep{grunwald2024safe,perez2024statistics}. These developments provide the theoretical background for the use of e-values in both fixed-sample and sequential inference.

In fixed-sample multiple testing, e-BH and related procedures establish FDR control under broad dependence conditions~\citep{wang2022false}, extending the classical Benjamini--Hochberg framework~\citep{benjamini1995controlling,benjamini2001control}. Related developments include stepwise, closed-testing, and post hoc procedures~\citep{holm1979simple,simes1986improved,shaffer1986modified,marcus1976closed,blanchard2008two,blanchard2020post}. Subsequent work has developed broader confidence and discovery procedures based on e-values~\citep{vovk2023confidence}, online multiple testing~\citep{xu2024online}, and knockoff-based variable selection~\citep{ren2024derandomised}. Power improvements for e-BH have also been studied through randomization and conditional calibration~\citep{xu2023more,lee2024boosting}. Our fixed-sample results sharpen the calibration of e-value rejection thresholds using additional regularity of the null distribution.

The sequential setting requires additional care. Nonnegative supermartingales provide a classical route to anytime-valid inference, with Ville's inequality controlling their running maxima~\citep{ville1939etude,howard2020time}. Modern work connects these martingale ideas to e-processes and anytime-valid inference~\citep{ramdas2020admissible,ramdas2023game}, and develops safe testing, powerful e-statistics, and methods for combining sequential e-values~\citep{grunwald2024safe,perez2024statistics,vovk2024merging}. Randomized Ville calibration retains the standard boundary during monitoring and adds an independent randomized check at the terminal stopping time~\citep{ramdas2026randomized}. In our sequential construction, Lipschitz regularity is used to construct new conditional e-value sequence whose product forms a valid supermartingale.

Most directly related, \citet{blier2024improved} obtain approximately factor-two small-$\alpha$ threshold reductions for decreasing or unimodal densities and also study boosting e-values for e-BH under additional distributional assumptions. Motivated by the Markov worst case, we study a nondecreasing density to restrict tail allocation and a Lipschitz density to control local concentration. This use of Lipschitz regularity follows the standard nonparametric idea of describing a density through a smoothness class~\citep{tsybakov2009introduction}.

\section{Problem Setup} \label{sec:setup}

Let $E$ be a nonnegative e-variable satisfying $\mathbb{E}[E] \le 1$ under the null hypothesis. For $\alpha \in (0,1)$ and a nonempty class $\mathcal E$ of admissible e-variables, define the smallest rejection threshold that controls the type-I error at significance level $\alpha$ by
\begin{equation*}
    T_\alpha(\mathcal{E})
    \triangleq
    \inf\left\{
        t\geq0:
        \sup_{E\in\mathcal{E}} \mathbb{P}(E \geq t) \leq \alpha
    \right\}.
\end{equation*}
For unrestricted e-variables, Markov's inequality gives $T_\alpha=1/\alpha$. Under additional structure on the null distribution, smaller valid thresholds may be possible. We study two such structural restrictions.

\begin{definition}[Nondecreasing-density class]
    Let $b>0$ be a fixed support bound. The class $\mathcal E_b$ consists of e-variables whose null distribution is supported on $[0,b]$, allows an atom at zero, and admits a non-decreasing density $f$ on $(0,b]$.
\end{definition}

\begin{definition}[Lipschitz-density class]
    Let $L>0$ be a known bound. The class $\mathcal E_L$ consists of e-variables whose null distribution admits a density $f$ on $[0, \infty)$ satisfying
    \begin{equation*}
        |f(x)-f(y)| \leq L|x-y|,\quad x,y \geq 0.
    \end{equation*}
\end{definition}
Here $L$ is an upper bound, not necessarily the smallest Lipschitz constant. We next introduce the two standard e-value inference procedures used in our later developments: \textbf{e-BH} for multiple testing and \textbf{e-processes} for sequential inference.

For multiple testing, consider $K$ hypotheses with e-values $E_1,\ldots,E_K$, and write $\mathcal K=\{1,\ldots,K\}$ and $\mathcal H_0\subseteq\mathcal K$ for the true-null indices. Let $e_{(1)}\geq\cdots\geq e_{(K)}$ denote the ordered realizations. Given deterministic thresholds $t_1\geq\cdots\geq t_K$, the step-up procedure rejects the hypotheses corresponding to the $R$ largest e-values, where
\begin{equation}
\label{eq:def-R}
    R\triangleq\max\left\{k\in\mathcal K:e_{(k)}\geq t_k\right\},
\end{equation}
with $R=0$ if the set is empty. Let $\mathcal R$ denote the rejection set. The false discovery rate is
\begin{equation*}
    \mathrm{FDR}\triangleq \mathbb E \left[ \frac{|\mathcal R \cap\mathcal H_0|}{R \vee 1} \right],
\end{equation*}
and the target is $\mathrm{FDR}\leq\alpha$. Standard e-BH uses $t_k=K/(k\alpha)$~\citep{wang2022false}.

For sequential inference, let $(\mathcal F_t)_{t\geq0}$ be a filtration. A nonnegative, adapted sequence $(E_t)_{t\geq1}$ is called a sequence of conditional e-values if $\mathbb E[E_t\mid\mathcal F_{t-1}]\leq1$ under the null hypothesis. The cumulative product $M_t=\prod_{s=1}^t E_s$ is a nonnegative supermartingale with $M_0=1$~\citep{vovk2024merging}. At level $\alpha\in(0,1)$, the sequential test rejects at
\begin{equation*}
    \tau_\alpha\triangleq\inf\{t\geq1:M_t\geq1/\alpha\},
\end{equation*}
where $\inf\emptyset=\infty$. Ville's inequality guarantees $\mathbb P (\tau_\alpha < \infty) \le \alpha$~\citep{ville1939etude,howard2020time}.

\section{Rejection Thresholds under Structural Constraints} \label{sec:Lip-constrained thresholds}

Markov's inequality uses only the first-moment bound $\mathbb E[E]\leq1$. Its extremal distribution places all tail mass exactly at the threshold and the remaining mass at zero; the mean then equals the threshold multiplied by the tail probability. The two structural classes in~\Cref{sec:setup} restrict different parts of this construction. For $\mathcal E_b$, nondecreasingness forces tail mass to extend towards the support bound. For $\mathcal E_L$, Lipschitz regularity forces mass near zero to occupy an interval of positive width. We develop the two calibrations separately: the first admits an exact minimax characterization, while the second gives an explicit threshold with additive asymptotic sharpness.

\subsection{Thresholds under Nondecreasing Densities} \label{subsec:shape}

In the class $\mathcal E_b$, we examine whether the density constraint permits a lower threshold within the support, and how far it can be reduced while keeping the rejection probability at most $\alpha$ for every null distribution in the class.

\begin{theorem}[Nondecreasing-density thresholds and minimax optimality]
\label{thm:mi-benchmark}
Let $\alpha\in(0,1)$ and $b>0$. For $0<t<b$,
\begin{equation*}
    \sup_{E\in\mathcal E_b}\mathbb P(E\ge t)
    =\min\left\{1, \frac{2}{t+b}\right\},
\end{equation*}
whereas $\mathbb P(E \ge b) = 0$ for every $E \in \mathcal E_b$. The minimax threshold infimum is
\begin{equation*}
    T_\alpha(\mathcal E_b)=
    \begin{cases}
        \min\{b,\,2/\alpha-b\}, & b<2/\alpha,\\
        0, & b\ge2/\alpha.
    \end{cases}
\end{equation*}
When $b<2/\alpha$, the threshold $T_\alpha^{(b)}=\min\{b,2/\alpha-b\}>0$ is uniformly valid and minimax optimal. When $b\ge2/\alpha$, every strictly positive threshold is uniformly valid.
\end{theorem}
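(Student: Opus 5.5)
The plan is to prove the supremum formula by a matching upper bound and an explicit extremal construction, and then read off the threshold $T_\alpha(\mathcal E_b)$ from it by elementary case analysis.

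\textbf{Upper bound.} Fix $0<t<b$ and $E\in\mathcal E_b$ with density $f$ on $(0,b]$, and set $p=\mathbb P(E\ge t)$. If $p=0$ there is nothing to show, so assume $p>0$. On $[t,b]$ the law of $E$ given $E\ge t$ has density $g=f/p$. This density is nondecreasing, so its CDF $G(x)=\int_t^x g$ is convex on $[t,b]$, with $G(t)=0$ and $G(b)=1$. By convexity $G$ lies below its chord, $G(x)\le (x-t)/(b-t)$. Therefore
\[
\mathbb E[E\mid E\ge t]=b-\int_t^b G(x)\,dx\ \ge\ b-\frac{b-t}{2}=\frac{t+b}{2}.
\]
This says the uniform law minimizes the mean among nondecreasing densities on the interval. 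Since $E\ge 0$, we get $1\ge\mathbb E[E]\ge \mathbb E[E\mathbf 1\{E\ge t\}]\ge p(t+b)/2$, hence $p\le 2/(t+b)$. Trivially $p\le 1$ as well. Separately, $\mathbb P(E\ge b)=\mathbb P(E=b)=0$, because the law on $(0,b]$ is absolutely continuous; the only atom is at zero.

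\textbf{Attainment.} Let $p^*=\min\{1,2/(t+b)\}$. Take an atom of mass $1-p^*$ at $0$, density $0$ on $(0,t)$, and density $p^*/(b-t)$ on $[t,b]$. This density is nondecreasing on $(0,b]$, and the law is supported on $[0,b]$. Its mean is $p^*(t+b)/2\le 1$, so it is an e-variable in $\mathcal E_b$, and $\mathbb P(E\ge t)=p^*$. This gives the stated equality.

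\textbf{Threshold.} Write $s(t)=\sup_{E\in\mathcal E_b}\mathbb P(E\ge t)$. Then $s(0)=1$, $s(t)=\min\{1,2/(t+b)\}$ for $0<t<b$, and $s(t)=0$ for $t\ge b$. Because $\alpha<1$, for $0<t<b$ the condition $s(t)\le\alpha$ is equivalent to $t\ge 2/\alpha-b$. Hence the valid set is $\{t\in(0,b):t\ge 2/\alpha-b\}\cup[b,\infty)$, and we split on the sign of $2/\alpha-b$.
\begin{itemize}
\item If $b<2/\alpha$, the infimum of the valid set is $\min\{b,2/\alpha-b\}>0$. This infimum is itself valid: when $2/\alpha-b<b$ we have $s(2/\alpha-b)=\alpha$ exactly, and otherwise $s(b)=0$. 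Any smaller nonnegative $t$ has $s(t)>\alpha$, which gives minimax optimality.
\item If $b\ge 2/\alpha$, every $t>0$ satisfies $2/(t+b)<2/b\le\alpha$, so every positive threshold is valid. The threshold $0$ is not valid, since $s(0)=1$, but the infimum is $0$.
\end{itemize}

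The only non-routine step is the dominance bound $\mathbb E[E\mid E\ge t]\ge (t+b)/2$. The convexity and chord argument above handles it. One must remember that monotonicity of $f$ on all of $(0,b]$ restricts to $[t,b]$, and that the extremal density must vanish on $(0,t)$ so that the construction stays nondecreasing.
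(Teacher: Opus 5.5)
Your proposal is correct and follows essentially the same route as the paper: the bound $\mathbb E[E\mathbf 1\{E\ge t\}]\ge\frac{t+b}{2}\,\mathbb P(E\ge t)$, attainment by an atom at $0$ plus a uniform law on $[t,b]$, and the same case analysis for $T_\alpha(\mathcal E_b)$. The only difference is how you get the key bound: you use convexity of the conditional CDF (it lies below its chord), whereas the paper uses Chebyshev's integral inequality, and the two arguments are equivalent here.
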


\begin{proof}[\textbf{Proof Sketch}]
For $0<t<b$, Chebyshev's integral inequality gives
\begin{equation*}
    \begin{aligned}
        1&\ge\mathbb E[E]
        \ge\int_t^b xf(x)\,\mathrm dx\\
        &\ge\frac{1}{b-t}
        \int_t^b x\,\mathrm dx
        \int_t^b f(x)\,\mathrm d x =\frac{t+b}{2}\mathbb P(E\ge t).
    \end{aligned}
\end{equation*}
The upper bound is attained by placing mass $\min\{1,2/(t+b)\}$ uniformly on $[t,b]$ and the rest at zero. The resulting density is nondecreasing on $(0,b]$, and its mean is at most one. This construction establishes the minimax optimality of the positive threshold stated in the theorem. Appendix~\ref{app:monotone-benchmark} provides the full proof.
\end{proof}

The first-moment calculation also identifies the source of the improvement. For $0<t<b$, nondecreasingness yields
\begin{equation*}
    \mathbb E[(E-t) \mathbf 1 \{E \ge t\}] \ge \frac{b-t}{2} \mathbb P(E\ge t).
\end{equation*}
Thus each unit of tail probability contributes at least $(t+b)/2$ to the mean, rather than the $t$ used in Markov's bound. Probability mass at zero contributes zero to the mean. The sharper calibration in this class is therefore driven by the excess first moment within the tail.

\subsection{Thresholds under Lipschitz Densities}

The class $\mathcal E_L$ supplies a different kind of information: it bounds the density's local variation while allowing nonmonotone shapes and unbounded support. Its calibration relies on the probability below the rejection threshold. A fixed amount of mass cannot be packed arbitrarily close to zero under a slope bound, and moving that mass away from zero forces it to contribute to the mean. The following lemma quantifies the smallest possible contribution.

\begin{lemma}[Lipschitz integral inequality] \label{lemma:lipschitz}
    Let $L,t>0$ and let $f:[0,t]\to\mathbb R_{\ge0}$ be $L$-Lipschitz. If $m=\int_0^t f(x)\,\mathrm dx$ and $t\ge\sqrt{2m/L}$, then
    \begin{equation*}
        \int_0^t xf(x)\,\mathrm dx
        \ge\frac{\sqrt2}{3\sqrt L}m^{3/2}.
    \end{equation*}
    Equality holds exactly for $f(x)=(\sqrt{2Lm}-Lx)_+$.
\end{lemma}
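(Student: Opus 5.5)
The plan is a pointwise comparison (bathtub/rearrangement) argument against the extremal profile $g(x)=(\sqrt{2Lm}-Lx)_+$. Set $a=\sqrt{2m/L}$. Then $g$ is supported on $[0,a]$, and its total mass is $\tfrac12 a\cdot La = m$. The hypothesis $t\ge a$ ensures $g$ lives inside $[0,t]$, so $f$ and $g$ can be compared on the same interval. Direct integration gives
\begin{equation*}
\int_0^a x(La - Lx)\,\mathrm dx = \frac{La^3}{6} = \frac{L}{6}\Bigl(\frac{2m}{L}\Bigr)^{3/2} = \frac{\sqrt2}{3\sqrt L}\,m^{3/2},
\end{equation*}
which is exactly the claimed bound. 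It therefore suffices to show $\int_0^t x f \ge \int_0^t x g$ whenever $\int_0^t f=\int_0^t g$.

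The key structural step is a single-crossing property of $h=f-g$. I would first show $f(x)\le f(0)+Lx$ and, more to the point, $f(x)\ge f(0)-Lx$, from Lipschitzness. I would then split into two cases according to the sign of $f(0)-La$.

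\emph{Case $f(0)\le La$.} Here I would argue that $h$ changes sign at most once, from nonpositive to nonnegative. On $[0,a]$, the difference $f-g$ satisfies $h'\ge -L+L=0$ in the a.e. sense, since $f'\ge -L$ and $g'=-L$. So $h$ is nondecreasing on $[0,a]$. On $(a,t]$, $g=0$ and $h=f\ge0$. Hence there is a point $c$ with $h\le0$ on $[0,c]$ and $h\ge0$ on $[c,t]$.

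\emph{Case $f(0)>La$.} Then $f(x)\ge f(0)-Lx>g(x)$ everywhere on $[0,a]$, and $f\ge0=g$ on $(a,t]$. This gives $\int f>\int g = m$, a contradiction, so this case cannot occur.

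With single crossing and $\int_0^t h=0$, the conclusion is standard:
\begin{equation*}
\int_0^t x\,h(x)\,\mathrm dx = \int_0^t (x-c)\,h(x)\,\mathrm dx \ge 0,
\end{equation*}
because $(x-c)h(x)\ge0$ pointwise.

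For the equality case, equality forces $(x-c)h(x)=0$ a.e., so $h=0$ a.e. off $\{c\}$. By continuity, $f=g$.

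The main subtlety I expect is the monotonicity of $h$ on $[0,a]$. Lipschitz functions are absolutely continuous, so $h(y)-h(x)=\int_x^y (f'+L)\ge0$, and this is fine. The other point to check is that when $t=a$ exactly, the interval $(a,t]$ is empty and nothing breaks. The remaining work is only the routine computation of $\int_0^t xg$.
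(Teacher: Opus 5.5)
Your proof is correct, and its core matches the paper's. The difference between $f$ and the triangular profile $g$ is nondecreasing on $[0,a]$ because $g$ descends at the maximal slope $-L$. A zero-integral nondecreasing (or single-crossing) function then has nonnegative first moment.

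Where you differ is in how you treat the mass $s=\int_a^t f$ lying beyond $a$:
\begin{itemize}
\item \textbf{The paper} first moves that mass inside. It replaces $f$ by $f_1=f+s/a$ on $[0,a]$ and by $0$ on $(a,t]$. This function is still $L$-Lipschitz on $[0,a]$, has the same total mass $m$, and has first moment smaller by at least $as/2$. The paper then applies Chebyshev's integral inequality on $[0,a]$ to the zero-integral nondecreasing function $f_1-g$. For the equality case it needs both $s=0$ and the double-integral form of the Chebyshev gap, which forces $f_1-g$ to be constant.
\item \textbf{You} skip the transport step. You note that $h=f-g=f\ge0$ on $(a,t]$, so $h$ is single-crossing on all of $[0,t]$. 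The identity $\int_0^t xh=\int_0^t(x-c)h\ge0$ then finishes in one step, and equality follows directly from continuity of $(x-c)h$.
\end{itemize}

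Your route is shorter and gives the equality case more cleanly. The paper's route needs no sign information at $x=0$, since Chebyshev's inequality requires only monotonicity. It also makes explicit the quantitative first-moment penalty $as/2$ for mass placed beyond $a$.

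Your second case can in fact be dropped. If $h(0)>0$, then $h\ge0$ on $[0,t]$, and the bound $\int_0^t xh\ge0$ holds with $c=0$. If you keep that case, note that when $m=0$ the interval $[0,a]$ degenerates to a point. The strict inequality $\int_0^t f>m$ must then come from continuity of $f$ at $0$, not from $f>g$ on $[0,a]$.
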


The equality case is a triangular density profile descending from zero with the largest permitted slope. It places the prescribed mass as far to the left as the Lipschitz condition allows, and its support has width $\sqrt{2m/L}$. The lemma therefore turns a bound on local concentration into a lower bound on the first moment. Applied below a threshold satisfying $t\ge\sqrt{2\mathbb P(E<t)/L}$, it gives
\begin{equation*}
    \mathbb E[E\mathbf1\{E<t\}]
    \ge\frac{\sqrt2}{3\sqrt L}\mathbb P(E<t)^{3/2}.
\end{equation*}
Combining this contribution with the tail moment yields
\begin{equation*}
    1\ge\mathbb E[E]
    \ge\frac{\sqrt2}{3\sqrt L}\mathbb P(E<t)^{3/2}
    +t\mathbb P(E\ge t).
\end{equation*}
At the target rejection probability $\alpha$, the mass below the threshold is $1-\alpha$. Accounting for its required first moment leads to the following explicit calibration.

\begin{theorem}[Lipschitz threshold]\label{thm:lipschitz_threshold}
Let $\alpha\in(0,1)$ and $L\ge4/9$. The threshold
\begin{equation*}
    T_\alpha^{(L)}
    =\frac{1}{\alpha}
    \left(1-\frac{\sqrt2(1-\alpha)^{3/2}}{3\sqrt L}\right)
    \in(0,1/\alpha)
\end{equation*}
satisfies $\sup_{E\in\mathcal E_L}\mathbb P(E\ge T_\alpha^{(L)})\leq\alpha$.
\end{theorem}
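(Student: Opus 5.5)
Write $t=T_\alpha^{(L)}$, $c=\frac{\sqrt2}{3\sqrt L}$, $p=\mathbb P(E\ge t)$ and $m=1-p=\mathbb P(E<t)$ for a fixed $E\in\mathcal E_L$ with density $f$. The plan is to argue by contradiction, assuming $p>\alpha$, and to combine \Cref{lemma:lipschitz} on $[0,t]$ with the trivial tail bound, as in the displayed chain before the theorem. This gives
\begin{equation*}
1\ge\mathbb E[E]\ge\int_0^t xf(x)\,\mathrm dx+t\,p\ge c\,(1-p)^{3/2}+t\,p=:g(p).
\end{equation*}
The threshold is defined exactly so that $g(\alpha)=c(1-\alpha)^{3/2}+\alpha t=1$. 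Hence it suffices to show that $g(p)>1$ for every $p\in(\alpha,1]$.

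\textbf{Step 1: $g$ is strictly increasing past $\alpha$.} The function $g$ is convex in $p$, because $(1-p)^{3/2}$ is convex. So it is enough to check $g'(\alpha)>0$, that is,
\begin{equation*}
t>\tfrac32 c\,(1-\alpha)^{1/2}.
\end{equation*}
Put $s=\sqrt{1-\alpha}\in(0,1)$ and multiply through by $\alpha=1-s^2$. The condition becomes
\begin{equation*}
1-cs^3>\tfrac32 c s(1-s^2),\quad\text{i.e.}\quad 1>c\bigl(\tfrac32 s-\tfrac12 s^3\bigr).
\end{equation*}
The bracket is at most $1$ on $[0,1]$. Moreover $L\ge4/9$ gives $c\le\sqrt2/2<1$, so the inequality holds. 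The same bound $c<1$ also yields $t\in(0,1/\alpha)$. Convexity together with $g'(\alpha)>0$ then gives $g(p)>g(\alpha)=1$ for $p>\alpha$, which contradicts $\mathbb E[E]\le1$.

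\textbf{Step 2: the hypothesis of \Cref{lemma:lipschitz}.} This is the step I expect to be the main obstacle. We need $t\ge\sqrt{2m/L}$. When $p>\alpha$ we have $m<1-\alpha$, so it suffices to show $t\ge\sqrt{2(1-\alpha)/L}$. In the variables above, this reads
\begin{equation*}
1-cs^3\ge(1-s^2)\cdot3cs,\quad\text{i.e.}\quad 1\ge c\,(3s-2s^3).
\end{equation*}
The maximum of $3s-2s^3$ on $[0,1]$ equals $\sqrt2$ and is attained at $s=1/\sqrt2$. So the condition is $c\sqrt2\le1$, i.e.\ $L\ge4/9$, which is presumably why that assumption appears. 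As a check, at $\alpha=1/2$ and $L=4/9$ both sides equal $3/2$.

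I would do this maximization first. If it failed in some regime, the fallback would be a separate case $t<\sqrt{2m/L}$. There I would argue that a Lipschitz density carrying mass $m$ on an interval narrower than the extremal triangle must have an even larger first moment than $cm^{3/2}$. This would follow by comparing $f$ with the truncated extremal profile $(\sqrt{2Lm}-Lx)_+$ via a rearrangement or crossing argument.

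Finally, I would note that restricting $f$ to $[0,t]$ keeps it nonnegative and $L$-Lipschitz, so \Cref{lemma:lipschitz} applies verbatim. Since $E\in\mathcal E_L$ was arbitrary, $\sup_{E\in\mathcal E_L}\mathbb P(E\ge t)\le\alpha$.
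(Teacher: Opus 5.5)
Your proposal is correct and matches the paper's proof essentially step for step: the paper verifies $T_\alpha^{(L)}\ge\sqrt{2(1-\alpha)/L}$ by the same maximization (written as $L\ge 2(1-\alpha)(1+2\alpha)^2/9$, whose right-hand side is at most $4/9$), and then argues by contradiction from \Cref{lemma:lipschitz} plus the tail bound. The only difference is cosmetic: the paper shows $G(u)=\bigl(1-c(1-u)^{3/2}\bigr)/u$ is strictly decreasing, whereas you use convexity of $g$ with $g'(\alpha)>0$; both reduce to $c\bigl(\tfrac32 s-\tfrac12 s^3\bigr)<1$, so your fallback case is never needed.
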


\begin{proof}[Proof Sketch]
Set $t = T_\alpha^{(L)}$ and suppose $\mathbb P(E\ge t) > \alpha$. $L \ge 4 / 9$ ensures $t \ge \sqrt{ \frac{2 (1-\alpha)}{L}} \ge \sqrt{\frac{2 \mathbb P(E<t)}{L}}$,
so the preceding moment bound applies. Rearranging it gives
\begin{equation*}
    t\leq
    \frac{1-\sqrt2\,\mathbb P(E<t)^{3/2}/(3\sqrt L)}
    {\mathbb P(E\ge t)}.
\end{equation*}
The function $u\mapsto[1-\sqrt2(1-u)^{3/2}/(3\sqrt L)]/u$ is strictly decreasing on $(0,1]$. Since $\mathbb P(E\ge t)>\alpha$, the right-hand side is strictly smaller than $T_\alpha^{(L)}$, a contradiction. The range and monotonicity calculations, together with the proof of Lemma~\ref{lemma:lipschitz}, appear in Appendix~\ref{app:threshold-proof}.
\end{proof}

The correction subtracts the first moment forced below the threshold before assigning the remaining budget to the tail. It depends explicitly on both the rejection level and the density bound:
\begin{equation*}
    1-\alpha T_\alpha^{(L)}
    =\frac{\sqrt2(1-\alpha)^{3/2}}{3\sqrt L}.
\end{equation*}
A smaller valid bound $L$ gives a larger correction, while the threshold approaches $1/\alpha$ as $L$ increases. $L \ge 4 / 9$ is sufficient for the lemma's interval requirement to hold at the proposed threshold for every $\alpha \in (0,1)$. 

The proposed threshold may not be exactly minimax optimal for a fixed $\alpha$. However, the following theorem shows that, for each fixed $L>4/9$, its difference from the minimax threshold tends to zero as $\alpha\downarrow0$.

\begin{theorem}[Small-$\alpha$ asymptotic sharpness]\label{thm:lip_property}
Fix $L>4/9$. For every $\varepsilon>0$ and all sufficiently small $\alpha>0$, there exists $E\in\mathcal E_L$ such that
\begin{equation*}
    \mathbb P\bigl(E\ge T_\alpha^{(L)}-\varepsilon\bigr)>\alpha.
\end{equation*}
Consequently,
\begin{equation*}
    0 \le T_\alpha^{(L)} - T_\alpha (\mathcal E_L) \longrightarrow 0
    \qquad (\alpha \downarrow 0).
\end{equation*}
\end{theorem}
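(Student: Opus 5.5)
The plan is to prove the first claim with an explicit two-piece construction that nearly attains the moment bound behind \Cref{thm:lipschitz_threshold}. The second claim then follows from the first together with \Cref{thm:lipschitz_threshold}. Write $K=\sqrt2/(3\sqrt L)$ and $T=T_\alpha^{(L)}$, so that $\alpha T=1-K(1-\alpha)^{3/2}$.

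\textbf{Construction.} Fix $\varepsilon>0$. Set $p=\alpha(1+\delta)$ with $\delta=\alpha\varepsilon/2$, and let $m=1-p$. The density has two pieces with disjoint supports.
\begin{itemize}
\item The first piece is the extremal triangle from \Cref{lemma:lipschitz}, $f_0(x)=(\sqrt{2Lm}-Lx)_+$. It has mass $m$, support $[0,\sqrt{2m/L}]$, and first moment exactly $Km^{3/2}$.
\item The second piece is a symmetric tent of mass $p$ and half-width $w=\sqrt{p/L}$. Its height is $h=p/w=\sqrt{pL}$, so its slope is $h/w=L$. It is centered at $c=T-\varepsilon+w$, so its support is $[T-\varepsilon,\,T-\varepsilon+2w]$ and its first moment is $cp$.
\end{itemize}
Each piece vanishes continuously at its support endpoints and has slope at most $L$, so the sum is nonnegative, $L$-Lipschitz on $[0,\infty)$, and integrates to $1$.

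For the supports to be disjoint we need $\sqrt{2/L}<T-\varepsilon$. This holds for small $\alpha$ because $T\ge (1-K)/\alpha\to\infty$; note $K<1/2$ since $L>4/9$. By construction $\mathbb P(E\ge T-\varepsilon)\ge p>\alpha$.

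\textbf{Mean bound (main step).} It remains to check $\mathbb E[E]\le 1$. We have
\begin{equation*}
\mathbb E[E]-1 = K\bigl[(1-p)^{3/2}-(1-\alpha)^{3/2}\bigr] + \delta\bigl(1-K(1-\alpha)^{3/2}\bigr) - p\varepsilon + p\sqrt{p/L}.
\end{equation*}
This uses $pT=(1+\delta)\bigl(1-K(1-\alpha)^{3/2}\bigr)$. The terms are bounded as follows.
\begin{itemize}
\item The first bracket is $\le 0$, since $p>\alpha$.
\item The second term is at most $\delta=\alpha\varepsilon/2$.
\item The third term is $-p\varepsilon\le-\alpha\varepsilon$.
\item The last term is $O(\alpha^{3/2})$.
\end{itemize}
Hence $\mathbb E[E]-1\le -\alpha\varepsilon/2+O(\alpha^{3/2})<0$ for all sufficiently small $\alpha$, so $E\in\mathcal E_L$.

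The delicate point is exactly this balance. The tail mass must exceed $\alpha$ by a margin $\delta$ small enough that its extra cost is absorbed by the saving $p\varepsilon$. At the same time, the width $w$ that the Lipschitz constraint forces on the tail bump must contribute only $O(\alpha^{3/2})$, which is $o(\alpha\varepsilon)$. The choice $\delta=\alpha\varepsilon/2$, with the bump's left edge placed at $T-\varepsilon$, is designed to achieve both.

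\textbf{Consequence.} For small $\alpha$, every threshold $t\le T-\varepsilon$ has $\sup_{E\in\mathcal E_L}\mathbb P(E\ge t)\ge \mathbb P(E\ge T-\varepsilon)>\alpha$ for the $E$ above, since $\mathbb P(E\ge t)$ is monotone in $t$. Therefore $T_\alpha(\mathcal E_L)\ge T-\varepsilon$. Conversely, \Cref{thm:lipschitz_threshold} shows that $T$ is valid, so $T_\alpha(\mathcal E_L)\le T$. Thus $0\le T-T_\alpha(\mathcal E_L)\le\varepsilon$ for all small $\alpha$. Since $\varepsilon>0$ was arbitrary, the difference tends to $0$ as $\alpha\downarrow0$.
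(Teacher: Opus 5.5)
Your proof is correct and uses essentially the paper's construction: the extremal head triangle from \Cref{lemma:lipschitz} plus a slope-$L$ tent of half-width $\sqrt{p/L}$ placed beyond the head. The only difference is that you fix the tent's left edge at $T_\alpha^{(L)}-\varepsilon$ and check $\mathbb E[E]<1$ for an explicit excess mass $p-\alpha=\alpha^2\varepsilon/2$, whereas the paper keeps $\mathbb E[E]=1$ with left edge $T_{\alpha_1}^{(L)}-\sqrt{\alpha_1/L}$ and chooses $\alpha_1>\alpha$ by continuity. There is one harmless slip: $L>4/9$ gives only $K<1/\sqrt2$, not $K<1/2$, but $K<1$ is all you need for $T_\alpha^{(L)}\to\infty$.
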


The proof constructs a null distribution in $\mathcal E_L$ from a triangular component near zero and a narrow triangular tail, each with slopes bounded in absolute value by $L$. The first component attains the lemma's lower bound for the mass below the threshold. The tail is centered so that the total mean is one, and its width shrinks with its probability mass. With tail mass $\alpha$, its left endpoint is $T_\alpha^{(L)}-\sqrt{\alpha/L}$. A small increase in the tail mass then makes any proposed fixed additive reduction invalid once $\alpha$ is sufficiently small. The explicit construction and its proof are given in Appendix~\ref{app:sharpness-proof}.

At the same time, the relative improvement over Markov calibration remains nonzero:
\begin{equation*}
    1-\alpha T_\alpha^{(L)}
    \longrightarrow\frac{\sqrt2}{3\sqrt L}
    \qquad(\alpha\downarrow0).
\end{equation*}
The distinction between these two scales is useful. The Lipschitz restriction removes a fixed fraction of the Markov threshold as rejection becomes rare, while any further uniformly valid improvement over the proposed threshold must tend to zero in absolute size. The persistent relative gain arises because the probability below the threshold approaches one and continues to require a positive first moment.

\textbf{Unified view.}
Both calibrations refine the decomposition
\begin{equation*}
    \mathbb E[E] = t\mathbb P(E\ge t) + \mathbb E[E\mathbf1\{E<t\}] + \mathbb E[(E-t) \mathbf 1 \{E\ge t\}].
\end{equation*}
Markov's inequality retains only the first term. Nondecreasingness supplies a lower bound on the excess moment within the tail, while the Lipschitz argument supplies a lower bound on the moment below the threshold. These contributions are zero for Markov's extreme value distribution. Together, these bounds show that structural constraints make a given tail probability more costly in terms of the first moment, which leads to a smaller valid rejection threshold.

\section{Applications of the Lipschitz Calibration} \label{sec:applications}

The Lipschitz threshold refines the Markov calibration for structured e-values \citep{vovk2021values}. We show that the same structural calibration enters both multiple testing and sequential inference: in multiple testing, it gives a uniformly sharper replacement for the e-BH threshold \citep{wang2022false}; in sequential inference, it yields an explicit update interface for constructing anytime-valid processes \citep{ville1939etude, howard2020time, ramdas2023game}.

\subsection{Multiple Testing}

We first incorporate the Lipschitz threshold into the standard step-up framework~\citep{benjamini1995controlling}.  Since e-BH rejects by comparing ordered e-values to Markov-based thresholds~\citep{wang2022false}, a natural modification is to replace those thresholds by their Lipschitz-calibrated counterparts. The resulting procedure denoted e-LC rejects the hypotheses corresponding to the $R$ largest e-values, with $R$ defined in Eq.~\eqref{eq:def-R} and $t_k \triangleq T_{k\alpha/K}^{(L)}$.
The next proposition establishes FDR control when the inputs are independent.

\begin{proposition}[Lipschitz-calibrated e-BH under independence] \label{prop:lipschitz_fdr}
    Fix $\alpha\in(0,1)$ and $L\ge4/9$. Let $\mathcal H_0$ denote the set of true nulls and let $K_0=|\mathcal H_0|$. Assume that the e-values $(E_1,\ldots,E_K)$ are mutually independent and that, for each $i\in\mathcal H_0$, the null distribution of $E_i$ belongs to $\mathcal{E}_L$. Then the above step-up procedure satisfies
    \begin{equation}
        \mathrm{FDR}\leq \frac{K_0}{K}\alpha\leq \alpha.
    \end{equation}
\end{proposition}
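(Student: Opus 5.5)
We follow the standard leave-one-out argument for step-up procedures under independence, with Theorem~\ref{thm:lipschitz_threshold} taking the place of Markov's inequality. First we decompose the FDR over true nulls:
\begin{equation*}
    \mathrm{FDR}=\sum_{i\in\mathcal H_0}\mathbb E\left[\frac{\mathbf 1\{i\in\mathcal R\}}{R\vee1}\right]
    =\sum_{i\in\mathcal H_0}\sum_{k=1}^K\frac1k\,\mathbb P(i\in\mathcal R,\ R=k).
\end{equation*}
The thresholds $t_k=T^{(L)}_{k\alpha/K}$ must be nonincreasing in $k$ for the step-up procedure to be well defined. We check this first: $u\mapsto T^{(L)}_u=[1-\sqrt2(1-u)^{3/2}/(3\sqrt L)]/u$ is strictly decreasing on $(0,1]$, which is exactly the monotonicity fact recorded in the proof of Theorem~\ref{thm:lipschitz_threshold}.

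Next, fix a null index $i$ and define the leave-one-out count $R_i$. This is the number of rejections the step-up procedure makes when $e_i$ is replaced by $+\infty$. We apply the usual self-consistency observations. If $i\in\mathcal R$ with $R=k$, then $E_i\ge t_k$, since the $k$-th largest e-value is at least $t_k$ and $E_i$ is among the top $k$. Moreover, raising $E_i$ to $+\infty$ does not change the rejection count, so $R=R_i$. Hence $\{i\in\mathcal R,\ R=k\}\subseteq\{E_i\ge t_k,\ R_i=k\}$. Here $R_i$ is a function of $(E_j)_{j\ne i}$ only, and is therefore independent of $E_i$. This gives
\begin{equation*}
    \mathbb P(i\in\mathcal R,\ R=k)\le \mathbb P(E_i\ge t_k)\,\mathbb P(R_i=k)\le \frac{k\alpha}{K}\,\mathbb P(R_i=k).
\end{equation*}
The last step uses Theorem~\ref{thm:lipschitz_threshold} at level $k\alpha/K\in(0,1)$, which is valid because the null law of $E_i$ lies in $\mathcal E_L$ and $L\ge4/9$. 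The case $k\alpha/K=1$ cannot occur, since $\alpha<1$.

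Summing over $k$, the factors $1/k$ cancel, so each null contributes at most $\frac{\alpha}{K}\sum_k\mathbb P(R_i=k)\le \alpha/K$. Summing over $i\in\mathcal H_0$ then gives $\mathrm{FDR}\le K_0\alpha/K\le\alpha$.

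The step needing the most care is the leave-one-out identity $R=R_i$ on $\{i\in\mathcal R\}$. We would verify it directly from the definition of $R$ in Eq.~\eqref{eq:def-R}. Replacing $e_i$ by $+\infty$ moves $i$ to the first position. For $j\le k$, the $j$-th order statistic can only increase, so $R_i\ge k$. For indices $j>k$, the order statistics other than $e_i$ are unchanged, because $e_i$ was already among the top $k$. So no larger index satisfies the step-up condition, and $R_i=k$. Ties are handled by any fixed tie-breaking rule.
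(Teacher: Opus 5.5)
Your proof is correct, but it takes a different route from the paper. The paper never runs a leave-one-out argument on the e-values. Instead it takes four steps:
\begin{enumerate}
\item It defines induced p-values $\tilde p_i=\inf(\{u\in(0,1]:E_i\ge T_u^{(L)}\}\cup\{1\})$.
\item It shows these are super-uniform by applying Theorem~\ref{thm:lipschitz_threshold} at levels $x+\varepsilon$ and letting $\varepsilon\downarrow0$.
\item It proves the exact equivalence $E_i\ge T_a^{(L)}\iff\tilde p_i\le a$, which uses monotonicity and right-continuity of $u\mapsto T_u^{(L)}$, so that e-LC coincides with BH applied to $(\tilde p_i)$.
\item It cites the classical BH theorem under independence as a black box.
\end{enumerate}
You instead reprove that theorem directly on the e-value scale. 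You use the standard $1/k$ decomposition and the leave-one-out count $R_i$, which depends only on $(E_j)_{j\ne i}$ and is therefore independent of $E_i$.

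Your version is more self-contained and needs less. It uses only the validity $\mathbb P(E_i\ge t_k)\le k\alpha/K$ at the $K$ grid levels, and only independence of each null $E_i$ from the remaining e-values. It needs no inverse map, no limiting argument, and no continuity of the threshold curve. In fact your identity $R=R_i$ on $\{i\in\mathcal R\}$ holds for arbitrary deterministic thresholds. So your opening monotonicity check is needed only to match the paper's setup and to make $\mathcal R$ insensitive to tie-breaking, not for the FDR bound itself.

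The paper's reduction buys something different. It gives a p-value representation of e-LC and an exact identification with BH. As a result, any further BH guarantee, for instance under positive dependence of the induced p-values, transfers immediately.
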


The proof inverts the threshold curve to obtain induced p-values and applies BH to them. Monotonicity and uniform validity of $T_u^{(L)}$ at each level $u$ make these p-values super-uniform under their nulls; independence is inherited from the original e-values. Algorithm~\ref{alg:elc-stepup} gives the direct implementation. The procedure requires \(O(K\log K)\) time due to sorting and \(O(K)\) space.

\begin{algorithm}[h]
\caption{e-LC procedure}
\label{alg:elc-stepup}
\KwIn{Nonnegative e-values $(e_i)_{i=1}^K$; level $\alpha\in(0,1)$; bound $L\ge4/9$.
}
\KwOut{Rejected original indices $\mathcal R\subseteq \mathcal K$ and count $R = |\mathcal R|$.}
Sort indices $\pi(1),\ldots,\pi(K)$ so that $e_{\pi(1)}\ge\cdots\ge e_{\pi(K)}$ (break ties by original index)\;
$R \gets 0$\;
\For{$k=1$ \KwTo $K$}{
  $u \gets \alpha k/K$\;
  $t_k \gets \bigl(1-\sqrt{2}(1-u)^{3/2}/(3\sqrt L)\bigr)/u$\;
  \If{$e_{\pi(k)} \ge t_k$}{
    $R \gets k$\;
  }
}
$\mathcal R \gets \{\pi(j):1\le j\le R\}$ \tcp*{Empty if $R=0$}
\Return{$(\mathcal R,R)$}\;
\end{algorithm}

To retain the arbitrary-dependence guarantee of standard e-BH, we use the regularity information to transform the inputs themselves. The lower-tail moment bound limits the null expectation contributed by e-values above $T_\alpha^{(L)}$. Truncating below this threshold and rescaling the retained values therefore produces valid boosted e-values, to which the usual e-BH guarantee applies.

\begin{theorem}[Boosted e-LC under arbitrary dependence] \label{thm:lipschitz-truncated-ebh}
    Let $\alpha \in (0,1)$, $L \ge 4 / 9$, and $d_{\alpha,L} = 1 - \sqrt 2 (1 - \alpha)^{3 / 2} / (3 \sqrt L)$. Given $\{E_k\}_{k \in \mathcal K}$, define
    \begin{equation*}
        E_i^{\mathfrak b} = \frac{E_i}{d_{\alpha, L}} \mathbf 1 \{E_i \ge T_\alpha^{(L)}\}.
    \end{equation*}
    If every true-null $E_i$ belongs to $\mathcal E_L$, standard e-BH at level $\alpha$ applied to $E_1^{\mathfrak b},\ldots,E_K^{\mathfrak b}$ satisfies $\mathrm{FDR}\le(K_0/K)\alpha\le\alpha$ under arbitrary dependence. Equivalently, the original-scale step-up thresholds are $t_k^{\mathfrak b} = d_{\alpha,L}K/(\alpha k)$, $k \in \mathcal K$.
\end{theorem}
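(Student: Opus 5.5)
The plan is to show that each boosted variable $E_i^{\mathfrak b}$ is a genuine e-value under its null. Once that holds, the standard e-BH guarantee of \citet{wang2022false} applies verbatim: e-BH at level $\alpha$ controls FDR at $(K_0/K)\alpha$ under arbitrary dependence whenever the true-null inputs satisfy $\mathbb E[E_i^{\mathfrak b}]\le 1$. The thresholds are never used jointly, so dependence plays no role. Everything therefore reduces to a single-variable inequality. Write $t=T_\alpha^{(L)}$, $c=\sqrt2/(3\sqrt L)$ and $d=d_{\alpha,L}=1-c(1-\alpha)^{3/2}$. The goal is
\begin{equation*}
    \mathbb E\bigl[E\,\mathbf 1\{E\ge t\}\bigr]\le d \qquad\text{for every } E\in\mathcal E_L .
\end{equation*}

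\textbf{Key steps.} First I decompose $\mathbb E[E\mathbf 1\{E\ge t\}]\le 1-\mathbb E[E\mathbf 1\{E<t\}]$. Next I lower bound the lower-part moment. By Theorem~\ref{thm:lipschitz_threshold}, $\mathbb P(E\ge t)\le\alpha$, so $m:=\mathbb P(E<t)\ge 1-\alpha$. Because the null distribution has a density, I can choose $s\le t$ with $\mathbb P(E<s)=1-\alpha$. Discarding the mass on $[s,t)$ only lowers the moment, so
\begin{equation*}
    \mathbb E[E\mathbf 1\{E<t\}]\ge\int_0^s xf(x)\,\mathrm dx .
\end{equation*}
I then apply Lemma~\ref{lemma:lipschitz} on $[0,s]$ with mass $1-\alpha$, which gives $\int_0^s xf\ge c(1-\alpha)^{3/2}$. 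Plugging this into the decomposition yields exactly $1-c(1-\alpha)^{3/2}=d$. Hence $\mathbb E[E^{\mathfrak b}]=\mathbb E[E\mathbf 1\{E\ge t\}]/d\le 1$.

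\textbf{Threshold equivalence.} E-BH rejects the $k$-th boosted value when $E^{\mathfrak b}\ge K/(\alpha k)$. This holds iff $E\ge t$ and $E\ge dK/(\alpha k)$. Since $k\le K$, we have $dK/(\alpha k)\ge d/\alpha=t$, so the indicator condition is automatic. The effective original-scale thresholds are therefore $t_k^{\mathfrak b}=d_{\alpha,L}K/(\alpha k)$. Any zero boosted value never passes a positive threshold, so the rejection sets agree.

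\textbf{Main obstacle.} The delicate point is the interval condition $s\ge\sqrt{2(1-\alpha)/L}$ required by Lemma~\ref{lemma:lipschitz}. In the proof of Theorem~\ref{thm:lipschitz_threshold} the mass below the threshold is at most $1-\alpha$, which makes the condition easy there. Here the mass is at least $1-\alpha$ and it may sit in a narrow interval, so the condition can fail. I would first try to prove a global version of the lemma: for an $L$-Lipschitz density on $[0,\infty)$ with total mass one, the leftmost mass $m$ always has first moment at least $c\,m^{3/2}$. The idea is to compare with the extremal triangle $(\sqrt{2Lm}-Lx)_+$ via a crossing argument on the distribution functions, using $F_f$ versus $F_{\mathrm{tri}}$ and the fact that a Lipschitz $f$ cannot exceed the triangle profile pointwise near zero without having too much mass. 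A quick check is also needed that a narrow, tall profile on $[0,s]$ is impossible. If $f$ is large on $[0,s]$, the Lipschitz bound forces $f$ to stay large well beyond $s$, and the total mass would then exceed one. If that argument closes, the bound follows for all $L\ge 4/9$. Otherwise, I would fall back on using $t\ge\sqrt{2(1-\alpha)/L}$, which holds for $L\ge4/9$ as noted after Theorem~\ref{thm:lipschitz_threshold}, and treating the case $s<\sqrt{2(1-\alpha)/L}$ separately by the mass-overflow argument.
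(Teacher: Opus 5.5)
\textbf{There is a genuine gap in the lower-moment step.} Truncating at the $(1-\alpha)$-quantile $s$ and applying Lemma~\ref{lemma:lipschitz} on $[0,s]$ does not merely need an extra check. The inequality $\int_0^s xf(x)\,\mathrm dx\ge c(1-\alpha)^{3/2}$ is false in general.

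Take $f(x)=(\sqrt{2L}-Lx)_+$. It is $L$-Lipschitz on $[0,\infty)$, has mass one and mean $\sqrt2/(3\sqrt L)\le1$, so it lies in $\mathcal E_L$. Its $(1-\alpha)$-quantile is $s=\sqrt{2/L}\,(1-\sqrt\alpha)$. This is strictly below $\sqrt{2(1-\alpha)/L}$ for every $\alpha\in(0,1)$. With $u=1-\sqrt\alpha$, one computes
\[
\int_0^s xf(x)\,\mathrm dx=\frac{\sqrt2}{3\sqrt L}\,(3u^2-2u^3),\qquad c(1-\alpha)^{3/2}=\frac{\sqrt2}{3\sqrt L}\,(2u-u^2)^{3/2}.
\]
Since $(2-u)^3-u(3-2u)^2=(1-u)^2(8-5u)>0$, the first quantity is strictly smaller. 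For example, $\alpha=1/4$ gives $1/2$ versus $(3/4)^{3/2}\approx0.65$ in units of $\sqrt2/(3\sqrt L)$.

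The same density refutes your proposed global lemma. It also shows that no mass-overflow argument can exclude $s<\sqrt{2(1-\alpha)/L}$: the profile is tall near zero yet has total mass exactly one. The mass on $[s,t)$ that you discard is exactly what the bound needs, so the chain of inequalities cannot close along this route.

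\textbf{The missing idea is to thin the density rather than truncate it, and this is what the paper does.} Keep the interval $[0,t]$ with $t=T_\alpha^{(L)}$; on it the width condition $t\ge\sqrt{2(1-\alpha)/L}$ does hold for $L\ge4/9$, which is the ingredient your fallback mentions. With $m=\mathbb P(E<t)\ge1-\alpha$, set $g=\frac{1-\alpha}{m}f$ on $[0,t]$. Then $g$ is nonnegative and $L$-Lipschitz (its constant is $\frac{1-\alpha}{m}L\le L$), has mass exactly $1-\alpha$, and satisfies $g\le f$. Lemma~\ref{lemma:lipschitz} then gives $\int_0^t xf\ge\int_0^t xg\ge c(1-\alpha)^{3/2}$.

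With this replacement, the rest of your argument matches the paper. That includes the reduction to single-variable validity of $E_i^{\mathfrak b}$, the dependence-free e-BH bound, and the threshold equivalence via $dK/(\alpha k)\ge d/\alpha=t$.
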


\begin{proof}[Proof Sketch]
    For each true null $i$, the threshold and lower-tail moment bounds imply $\mathbb E_0[E_i\mathbf1\{E_i\ge T_\alpha^{(L)}\}]\le d_{\alpha,L}$. Dividing by $d_{\alpha,L}$ makes $E_i^{\mathfrak b}$ a valid null e-value. Standard e-BH then gives the stated FDR guarantee~\citep{wang2022false}. Since $d_{\alpha,L}K/(\alpha k)\geq T_\alpha^{(L)}$, its rejection condition on the boosted scale is equivalent to the stated threshold on the original scale. Appendix~\ref{app:lip-truncated-ebh-proof} gives the complete argument.
\end{proof}

The two e-LC variants use the same Lipschitz correction in different ways. The direct rule applies the calibrated threshold at each rank level $k\alpha/K$, whereas the boosted rule uses the common normalization $d_{\alpha,L}$ to preserve e-value validity. Their thresholds satisfy
\begin{equation*}
    T_{k\alpha/K}^{(L)}\leq t_k^{\mathfrak b}\leq\frac{K}{k\alpha}, \quad k \in \mathcal K.
\end{equation*}
Thus both procedures improve on standard e-BH~\citep{wang2022false}. The direct rule retains the sharper rank-dependent thresholds with FDR control under independence, while the boosted rule trades some of this sharpness for validity under arbitrary dependence.

\subsection{Sequential Inference}

Sequential validity requires a different use of the Lipschitz condition. The fixed-sample threshold $T_\alpha^{(L)}$ controls the tail probability of a single e-value, whereas sequential inference requires time-uniform control of a cumulative process. We therefore keep Ville's boundary $1/\alpha$ unchanged and use Lipschitz regularity to calibrate each one-step update. Specifically, we transform $E_t$ into $\widetilde E_t=\phi_t(E_t)$ and form $M_t=\prod_{s=1}^t\widetilde E_s$. Let $(\mathcal F_t)$ denote the underlying filtration, with $E_t$ measurable with respect to $\mathcal F_t$. For a nonnegative update $\phi$, define $\Psi_L(\phi)=\sup_{E\in\mathcal E_L}\mathbb E[\phi(E)]$.

\begin{lemma}[Supermartingale Property of Inference Function] \label{lemma:supermartingale}
    If $\phi_t$ is a nonnegative predictable update, $\Psi_L(\phi_t)\le1$, and $E_t\mid\mathcal F_{t-1}$ has distribution in $\mathcal E_L$, then $M_0=1$ and $M_t=\prod_{s=1}^t\phi_s(E_s)$ form a nonnegative supermartingale.
\end{lemma}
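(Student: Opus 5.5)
The plan is to verify the three defining properties of a nonnegative supermartingale for $M_t=\prod_{s=1}^t\phi_s(E_s)$ with respect to $(\mathcal F_t)$: adaptedness, nonnegativity (with integrability handled along the way), and the one-step inequality $\mathbb E[M_t\mid\mathcal F_{t-1}]\le M_{t-1}$. Nonnegativity is immediate because each $\phi_s$ takes values in $[0,\infty)$, and $M_0=1$ holds by the empty-product convention. For adaptedness, predictability means that the map $(\omega,x)\mapsto\phi_t(\omega,x)$ is $\mathcal F_{t-1}\otimes\mathcal B$-measurable. Since $E_t$ is $\mathcal F_t$-measurable and $\mathcal F_{t-1}\subseteq\mathcal F_t$, the composition $\phi_t(E_t)$ is $\mathcal F_t$-measurable. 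By induction, $M_t$ is then $\mathcal F_t$-measurable.

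The core step is the conditional expectation bound. Because $M_{t-1}$ is $\mathcal F_{t-1}$-measurable and nonnegative, I would pull it out of the conditional expectation. For nonnegative variables this is valid even without integrability, if conditional expectations are interpreted in the extended sense. This gives
\begin{equation*}
    \mathbb E[M_t\mid\mathcal F_{t-1}]=M_{t-1}\,\mathbb E[\phi_t(E_t)\mid\mathcal F_{t-1}].
\end{equation*}
It therefore suffices to show $\mathbb E[\phi_t(E_t)\mid\mathcal F_{t-1}]\le1$ almost surely. To do this, I would fix a regular conditional distribution $Q_\omega$ of $E_t$ given $\mathcal F_{t-1}$, which exists because $E_t$ is real-valued. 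I would then use the disintegration formula
\begin{equation*}
    \mathbb E[\phi_t(E_t)\mid\mathcal F_{t-1}](\omega)=\int\phi_t(\omega,x)\,Q_\omega(\mathrm dx),
\end{equation*}
which holds for jointly measurable nonnegative integrands in which the first argument is $\mathcal F_{t-1}$-measurable.

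By hypothesis, $Q_\omega$ is the law of some $E\in\mathcal E_L$ for almost every $\omega$. Moreover, for each fixed $\omega$, the function $x\mapsto\phi_t(\omega,x)$ is a nonnegative update whose $\Psi_L$ value is at most $1$. I read the hypothesis $\Psi_L(\phi_t)\le1$ as holding pathwise, that is, for every realization of the predictable function. Then the integral above is at most $\sup_{E\in\mathcal E_L}\mathbb E[\phi_t(\omega,E)]=\Psi_L(\phi_t(\omega,\cdot))\le1$. Substituting this back gives $\mathbb E[M_t\mid\mathcal F_{t-1}]\le M_{t-1}$.

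Taking expectations and inducting yields $\mathbb E[M_t]\le1$, so each $M_t$ is integrable. This completes the supermartingale property, and Ville's inequality then applies with the unchanged boundary $1/\alpha$.

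The main obstacle is the measure-theoretic bookkeeping in the disintegration step. The update $\phi_t$ is random, being $\mathcal F_{t-1}$-measurable, so the bound $\Psi_L\le1$ must be applied $\omega$ by $\omega$ against the conditional law. This requires joint measurability of $\phi_t$ and the existence of a regular conditional distribution. I would handle this by stating predictability as joint measurability and invoking the standard disintegration theorem for real-valued random variables. The remaining steps are routine.
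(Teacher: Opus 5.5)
Your proposal is correct and follows the same route as the paper's proof: nonnegativity from $\phi_s\ge0$, pulling the $\mathcal F_{t-1}$-measurable factor $M_{t-1}$ out of the conditional expectation, bounding $\mathbb E[\phi_t(E_t)\mid\mathcal F_{t-1}]\le\Psi_L(\phi_t)\le1$, and inducting to obtain $\mathbb E[M_t]\le1$ and hence integrability. Your regular-conditional-distribution and joint-measurability bookkeeping, together with the pathwise reading of $\Psi_L(\phi_t)\le1$, simply make explicit the step that the paper asserts directly from the definition of $\Psi_L$.
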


We now seek a simple update family whose worst-case expectation can be calibrated explicitly. Consider the thresholded linear update
\begin{equation*}
\label{equation:lcupdate}
    \phi_{\lambda,\eta,\tau}(x)=(1-\lambda)+\lambda\eta(x-\tau)_+,
\end{equation*}
where $\lambda\in[0,1]$, $\eta\geq0$, and $\tau\geq0$. The parameter $\lambda$ controls the aggressiveness of the update, while $\tau$ determines which part of the input contributes to evidence accumulation.

We next derive an explicit calibration of $\eta$. This requires controlling the worst-case value of $\mathbb E[(E-\tau)_+]$. To obtain a closed-form bound, we impose the additional boundary condition $f(0)=0$. Together with Lipschitz regularity, this condition limits the probability mass near zero and leads to the following calibration.

\begin{theorem}[Closed-form Sequential Calibration] \label{thm:process-based}
    Let $L>0$, and let $\mathcal{E}_L^0$ denote the subclass of $\mathcal{E}_L$ whose densities satisfy $f(0)=0$. Suppose that $\mathcal{E}_L^0$ is nonempty, which necessarily implies $L\ge 8/9$. Let $\Psi_L^0$ denote the corresponding worst-case expectation operator over $\mathcal{E}_L^0$. For any threshold $\tau \in [0, 1/\sqrt{2L}]$, define
    \begin{equation}
        \eta_L(\tau) \triangleq \frac{1}{1 - \tau + \frac{1}{3}L\tau^3}. 
    \end{equation}
    Then for every strategy parameter $\lambda \in [0,1]$, this calibration ensures
    $\Psi_L^0(\phi_{\lambda,\eta_L(\tau),\tau}) \le 1$.

     Consequently, for any predictable sequence $\phi_t=\phi_{\lambda_t,\eta_L(\tau_t),\tau_t}$ with $\lambda_t\in[0,1]$ and $0\leq\tau_t\leq1/\sqrt{2L}$, if the conditional null law of each $E_t$ belongs to $\mathcal E_L^0$ and satisfies $\mathbb E[E_t\mid\mathcal F_{t-1}]\leq1$, then $M_0=1$ and $M_t=\prod_{s=1}^t\phi_s(E_s)$ form a nonnegative supermartingale. Hence, for every $\alpha\in(0,1)$,
    \begin{equation*}
        \mathbb{P}\left( {\sup}_{t \geq 1} M_t \geq {1}/{\alpha} \right) \le \alpha.
    \end{equation*}
\end{theorem}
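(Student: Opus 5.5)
The plan is to reduce everything to one moment bound. For each fixed $E\in\mathcal E_L^0$ with $\mathbb E[E]\le1$, I will show that $\mathbb E[(E-\tau)_+]\le 1-\tau+\tfrac13L\tau^3 = 1/\eta_L(\tau)$. Since $\phi_{\lambda,\eta,\tau}$ is affine in $(E-\tau)_+$, this gives
\begin{equation*}
\mathbb E\bigl[\phi_{\lambda,\eta_L(\tau),\tau}(E)\bigr]=(1-\lambda)+\lambda\,\eta_L(\tau)\,\mathbb E[(E-\tau)_+]\le(1-\lambda)+\lambda=1,
\end{equation*}
uniformly in $\lambda\in[0,1]$. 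Taking the supremum over the class yields $\Psi_L^0\le1$.

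\textbf{Step 1: positivity of $\eta_L(\tau)$.} I first check that $\eta_L(\tau)$ is well defined and nonnegative, i.e.\ that $1-\tau+\tfrac13L\tau^3>0$ on $[0,1/\sqrt{2L}]$. Since the class is assumed nonempty we have $L\ge 8/9$, so $\tau\le 1/\sqrt{2L}\le 3/4<1$. Hence $1-\tau>0$, and the cubic term is nonnegative.

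\textbf{Step 2: the moment bound.} I use the identity $(x-\tau)_+=x-\tau+(\tau-x)_+$. This gives
\begin{equation*}
\mathbb E[(E-\tau)_+]=\mathbb E[E]-\tau+\int_0^\tau(\tau-x)f(x)\,\mathrm dx\le 1-\tau+\int_0^\tau(\tau-x)f(x)\,\mathrm dx .
\end{equation*}
The boundary condition $f(0)=0$ together with the Lipschitz bound gives $f(x)\le Lx$. Therefore
\begin{equation*}
\int_0^\tau(\tau-x)f(x)\,\mathrm dx\le L\int_0^\tau(\tau-x)x\,\mathrm dx=\tfrac16L\tau^3\le\tfrac13L\tau^3 .
\end{equation*}
This already gives the claim, with room to spare; the restriction $\tau\le1/\sqrt{2L}$ is needed only for Step 1. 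This is the heart of the argument, and it is essentially a one-line consequence of $f(0)=0$.

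\textbf{Step 3: the side claim $L\ge8/9$.} I expect the parenthetical assertion that nonemptiness forces $L\ge 8/9$ to be the main obstacle, because it requires a small extremal problem. Under $0\le f(x)\le Lx$, Lipschitz continuity, and $\int f=1$, one must minimize the mean and show that it exceeds $1$ unless $L$ is large enough. I would try a rearrangement argument. Mass should be pushed left, so the optimal density rises with slope $L$ from $0$ and then descends with slope $-L$; the mean of this optimal profile is then computed and the resulting condition solved in $L$. If the constant turns out not to be exactly $8/9$, I note that the theorem only uses this implication through Step 1. Step 1 in turn only needs $1/\sqrt{2L}<1$, i.e.\ $L>1/2$, and any nonempty class certainly satisfies that.

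\textbf{Step 4: sequential conclusion.} For a predictable sequence with $\lambda_t\in[0,1]$ and $\tau_t\in[0,1/\sqrt{2L}]$, the parameters are $\mathcal F_{t-1}$-measurable. The bound from Step 2 applies conditionally to the conditional law of $E_t$, which lies in $\mathcal E_L^0$ and has conditional mean at most $1$. This gives $\mathbb E[\phi_t(E_t)\mid\mathcal F_{t-1}]\le1$. Lemma~\ref{lemma:supermartingale} then shows that $M_t$ is a nonnegative supermartingale with $M_0=1$, and Ville's inequality yields $\mathbb P(\sup_t M_t\ge1/\alpha)\le\alpha$.
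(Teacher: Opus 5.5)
Your argument is correct, but your core moment bound takes a different and more elementary route than the paper's. The paper sets $q=\mathbb P_f(X<\tau)$ and uses $f(x)\le Lx$ only to get $q\le L\tau^2/2$, which makes \Cref{lemma:lipschitz} applicable on $[0,\tau]$. It then lower-bounds $\int_0^\tau xf$ by $\frac{\sqrt2}{3\sqrt L}q^{3/2}$ and maximizes $\tau q-\frac{\sqrt2}{3\sqrt L}q^{3/2}$ over $q\in[0,L\tau^2/2]$, which yields $\tfrac13L\tau^3$. You bound the same remainder $\int_0^\tau(\tau-x)f(x)\,\mathrm dx$ pointwise, using $\tau-x\ge0$ and $f(x)\le Lx$. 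This needs no lemma and no optimization, and it gives $\tfrac16L\tau^3$.

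The factor of two is genuine. The lemma's extremal profile $(\sqrt{2Lq}-Lx)_+$ peaks at $x=0$, which $f(0)=0$ forbids. At $q=L\tau^2/2$ the only admissible density on $[0,\tau]$ is $Lx$, whose moment is $\tfrac13L\tau^3$ rather than the lemma's $\tfrac16L\tau^3$. So your route also justifies the larger calibration $(1-\tau+\tfrac16L\tau^3)^{-1}$. Your positivity check ($\tau\le3/4<1$) is also simpler than the paper's monotonicity computation of $1-\tau+\tfrac13L\tau^3$.

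Your Step 3 is not the obstacle you expect, and the paper handles it with the inequality you already use. From $f(x)\le Lx$ one gets $F(x)\le\min\{1,Lx^2/2\}$, hence
\begin{equation*}
1\ge\mathbb E_f[X]=\int_0^\infty\bigl(1-F(x)\bigr)\,\mathrm dx\ge\int_0^{\sqrt{2/L}}\Bigl(1-\frac{Lx^2}{2}\Bigr)\,\mathrm dx=\frac{2\sqrt2}{3\sqrt L},
\end{equation*}
which is exactly $L\ge8/9$. The constant $8/9$ comes from ignoring that the density must return to zero. Your tent rearrangement is sound and gives a sharper result. The unit-mass tent with slopes $\pm L$ has mean $1/\sqrt L$, and its distribution function dominates that of every admissible density. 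Hence $\mathcal E_L^0\ne\varnothing$ if and only if $L\ge1$, which implies the stated bound.

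Do not fall back on ``any nonempty class certainly satisfies $L>1/2$'' without proof. That lower bound is not automatic; it needs a moment argument of the above kind, so some version of Step 3 has to be carried out.
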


\begin{proof}[Proof Sketch]
    The boundary condition gives $f(x)\le Lx$ and bounds the mass below $\tau$. Combining this with Lemma~\ref{lemma:lipschitz} yields $\sup_{E\in\mathcal E_L^0}\mathbb E[(E-\tau)_+]\le1-\tau+L\tau^3/3$. Its reciprocal calibration therefore ensures $\Psi_L^0(\phi_t)\le1$. Conditional expectation and Ville's inequality give the result. The moment calculation and positivity check are in Appendix~\ref{app:process-based-proof}.
\end{proof}

\section{Numerical Experiments}
\label{sec:experiments}

\begin{figure*}[t]
    \centering
    \input{figs/tikz/fdr_elc1}%
    \hfill
    \input{figs/tikz/fdr_elc2}
\par\smallskip\pgfplotslegendfromname{fdrlegend}
    \caption{Mean rejection counts under two representative Lipschitz settings. Bands show SD over 20 repetitions.}
    \label{fig:lc_comparison}
\end{figure*}

We have shown that Lipschitz calibration yields \textit{ sharper thresholds} in fixed-sample multiple testing (\Cref{thm:lipschitz_threshold}) and \textit{valid one-step updates} for sequential testing (\Cref{thm:process-based}). In this section, we examine whether these calibrations lead to more discoveries and detections. All experiments were run on standard CPU hardware; no GPU acceleration was required.

\subsection{Power Gain in Multiple Testing}
\label{subsec:exp_multiple}

We first show how the threshold gain from Lipschitz calibration translates into more discoveries in fixed-sample multiple testing. We compare e-LC with standard e-BH and the benchmark from \citet{blier2024improved} using the same e-values. Our null e-values follow the normalized two-segment piecewise-linear density
\begin{equation}
f_0(x)=c
\begin{cases}
1-k_1x, & 0\le x\le x_1,\\
1-k_1x_1-k_2(x-x_1), & x_1<x\le x_2,\\
0, & \text{otherwise},
\end{cases}
\label{equ:family}
\end{equation}

where $c$ is the normalizing constant. Notice this density is both \textit{nonincreasing} and \textit{$L$-Lipschitz} with $L=c\max\{k_1,k_2\}$. At rank level $u=k\alpha/K$, the Lipschitz condition gives the e-LC threshold in \Cref{thm:lipschitz_threshold}, while the nonincreasing-density condition gives the threshold $1/(2u)$ of \citet{blier2024improved}. Because both structural calibrations apply to the same null family, $1/(2u)$ provides a direct benchmark for isolating the gain from Lipschitz calibration. All three procedures are applied to the same e-values and evaluated over 100 equally spaced levels $\alpha\in[0.02,0.10]$. Full configuration details are provided in Appendix~\ref{app:exp_setup}.

Comparing $T_u^{(L)}$ in \Cref{thm:lipschitz_threshold} with $1/(2u)$, e-LC is more aggressive whenever $L<\frac{8}{9}(1-u)^3$. This condition holds broadly across feasible settings in the piecewise-linear family. We performed a randomized feasibility study retaining only valid e-value distributions where e-LC has a lower threshold than the monotonicity benchmark in $72.82\%$, $81.88\%$, and $85.21\%$ of the feasible configurations at $\alpha\in\{0.05,0.02,0.01\}$, respectively. We, therefore, illustrate the $L^{-1/2}$ correction in \Cref{thm:lipschitz_threshold} using two representative examples. In \Cref{fig:lc_comparison}(a), where $L\approx0.788$, the two thresholds are close over the relevant range of $u$, and e-LC performs similarly to the monotonicity benchmark. In \Cref{fig:lc_comparison}(b), where $L\approx0.585$, the correction is larger, and e-LC yields more rejections than the monotonicity benchmark throughout the tested levels. Thus, smaller $L$ produces a larger threshold reduction, which translates into larger discovery gains in this experiment.

\subsection{Sequential and Anytime-Valid Inference}
\label{subsec:exp_sequential}

\begin{figure*}[t]
\centering

\adjustbox{width=\textwidth,height=.20\textheight}{%
\begin{minipage}{\textwidth}
\centering
\begin{minipage}[t]{.425\textwidth}
    \vspace{0pt}
    \centering
    \input{figs/tikz/sp500_rejection_rates}
    \vspace{-.3em}
    {\small (a) S\&P 500 rejection rates}
    \label{fig::sp500rejection}
\end{minipage}\hfill
\begin{minipage}[t]{.555\textwidth}
    \vspace{0pt}
    \centering
    \input{figs/tikz/mechanistic_demo}
    \par\smallskip
    \makebox[\linewidth][c]{%
    \hspace*{0.35cm}%
    \input{figs/tikz/gaussian_mean_first_crossing}%
    }
    \vspace{-.4em}
    {\small (b) Gaussian delayed-signal experiment}
    \label{fig::gussianexp}
\end{minipage}

\par\smallskip
\pgfplotslegendfromname{sequentiallegend}
\end{minipage}%
}

\caption{Sequential LC experiments on S\&P 500 paths and Gaussian delayed signals.}
\label{fig:sequential-mechanism}
\end{figure*}

We then investigate Lipschitz calibration in sequential testing via a controlled Gaussian mechanistic demonstration and a retrospective S\&P 500 analysis. The Gaussian demonstration uses a single shared likelihood-ratio e-value path $E_t=\exp(\theta X_t-\theta^2/2)$ with $\theta=0.3$. The weak-signal phase is constructed so that $E_t<\tau_L$ for every LC calibration, while the common strong-signal phase is constructed so that $E_t>24/23$. All methods use the same realized inputs in both phases. We compare $L=cL_0$ with $c\in\{0.5,1,2\}$ and directly verify the Gaussian null expectation for each LC calibration. The S\&P 500 experiment uses adjusted-close prices from January 2017 to June 2026, where each stock forms a sequential path and consecutive selected prices define the ratio inputs $E_{k,j}^{\mathrm{base}}=P_{k,j}/P_{k,j-1}$. For each path, we estimate a working Lipschitz scale from a fitted lognormal density and evaluate the same scale multipliers. The stock experiment is retrospective because the conditional e-value and conditional density assumptions are not established by the historical data.

Selective LC calibration can convert stronger late evidence into earlier Ville-boundary crossings. To preserve the null-expectation constraint, LC suppresses weak observations through $(E_t-\tau)_+$ and uses the calibrated scaling $\eta_L(\tau)$ to place more weight on sufficiently strong observations. Under our choice $\tau=0.5/\sqrt{2L}$, the calibrated evidence term is $\eta_L(\tau)(E_t-\tau)_+$, and it exceeds the vanilla term $E_t$ whenever $E_t>24/23$. Thus observations below this crossover are suppressed relative to vanilla, while observations above it lie in the LC amplification regime. The resulting factor of \Cref{equation:lcupdate} therefore remains valid while allowing larger multiplicative increments when the observed evidence is strong. As shown in the upper panel of \Cref{fig::gussianexp}, the weak-signal phase is constructed such that $E_t<\tau_k$ for all LC calibrations, so $(E_t-\tau_k)_+=0$ and all LC variants use the common factor $1-\lambda$, slightly below the vanilla update. In the strong-signal phase, $E_t>24/23$ by construction, so the LC update exceeds the vanilla update, causing faster growth and earlier crossings of both Ville boundaries. The lower panel further shows that LC has a smaller mean first-crossing time than vanilla across the tested range of $\lambda$. Earlier crossings increase the chance that a path rejects before the monitoring horizon, which can translate into a higher rejection rate. Consistently, \Cref{fig::sp500rejection} shows higher rejection rates for all LC variants than vanilla.

It should be noted that although \Cref{subsec:exp_multiple} and \Cref{subsec:exp_sequential} both show that gains from Lipschitz calibration can translate into more rejections or detections, the two setting, however, use Lipschitz information differently and therefore the source of the gains are different. In \Cref{subsec:exp_multiple}, the gain comes from a lower rejection threshold established in \Cref{thm:lipschitz_threshold}. Since the same e-values are compared against a smaller threshold than in standard e-BH, more hypotheses can satisfy the rejection condition and be discovered. Furthermore, the relative threshold reduction is of order $O(L^{-1/2})$, so a smaller $L$ allows a larger reduction and potentially larger discovery gains. In \Cref{subsec:exp_sequential}, the gain comes from the multiplicative update applied to each evidence term. The vanilla update uses the factor $(1-\lambda)+\lambda E$, while the LC update consider $(1-\lambda)+\lambda\eta_L(\tau)(E-\tau)_+$. Under the conditions of \Cref{thm:process-based} and with both processes compared against the same Ville boundary, this calibrated update preserves the supermartingale validity, while  $\eta_L(\tau)(E-\tau)_+$ can exceed the vanilla term $E$ for sufficiently strong observations. The cumulative LC process can therefore grow faster along favorable paths and cross the same Ville boundary more often, which provides a mechanism for the higher rejection rates observed in \Cref{subsec:exp_sequential}.

\section{Discussion} \label{sec:discussion}

This work starts from the worst-case construction behind Markov's inequality and asks how much the rejection threshold can improve once different parts of that construction are restricted. A nondecreasing density limits how tail mass can be placed and gives an exact threshold characterization. A Lipschitz density limits sharp local concentration and yields an explicit improvement with small-$\alpha$ asymptotic sharpness. We further incorporate Lipschitz information into both e-BH procedures and anytime-valid sequential inference. These results show how modest distributional information can be translated into sharper and more broadly useful e-value procedures.

Future work includes developing valid data-driven upper bounds for $L$ from empirical or prior information, beyond settings where $L$ can be derived analytically. Another direction is to develop practical guidelines for choosing sequential strategies under different data settings, signal patterns, and inferential goals.

\clearpage
\raggedbottom
\bibliographystyle{apalike}
\begingroup
\setlength{\bibsep}{0pt}
\bibliography{Reference}
\endgroup

\clearpage
\appendix
\flushbottom
\section{Proofs}
\label{app_prof}

\subsection{Proof of \texorpdfstring{\Cref{thm:mi-benchmark}}{Theorem \getrefnumber{thm:mi-benchmark}}}
\label{app:monotone-benchmark}

\begin{proof}
    First, $\mathcal F_b$ is nonempty for every $b>0$: with $p=\min\{1,2/b\}$, the mixture $(1-p)\delta_0+p\,\mathrm{Unif}(0,b)$ has mean $pb/2\leq1$ and a constant density on $(0,b]$.

    Fix $t\in(0,b)$ and write $p_t\triangleq\mathbb P(E\geq t)=\int_t^b f(x)\,\mathrm dx$. Since both $x$ and $f(x)$ are non-decreasing on $[t,b]$, Chebyshev's integral inequality gives
    \begin{align*}
        1\geq\mathbb E[E]
        &\geq\int_t^b xf(x)\,\mathrm dx\\
        &\geq\frac{1}{b-t}\left(\int_t^b x\,\mathrm dx\right)
        \left(\int_t^b f(x)\,\mathrm dx\right)
        =\frac{t+b}{2}p_t.
    \end{align*}
    Together with $p_t\leq1$, this yields
    \begin{equation*}
        \sup_{E\in\mathcal F_b}\mathbb P(E\geq t)
        \leq\min\left\{1,\frac{2}{t+b}\right\}.
    \end{equation*}

    To attain this bound, set $q=\min\{1,2/(t+b)\}$ and consider
    \begin{equation*}
        P_E^*=(1-q)\delta_0+q\,\mathrm{Unif}[t,b].
    \end{equation*}
    Its positive continuous part has density $f^*(x)=q(b-t)^{-1}\mathbf1\{t\leq x\leq b\}$ on $(0,b]$, which is non-decreasing and has mass $q$. Moreover, $\mathbb E_{P_E^*}[E]=q(t+b)/2\leq1$ and $\mathbb P_{P_E^*}(E\geq t)=q$. Hence the tail bound is exact for every $t\in(0,b)$.

    At $t=b$, the tail probability is zero because the positive part is absolutely continuous and has no atom at $b$; it also vanishes for $t>b$. At $t=0$, nonnegativity gives $\mathbb P(E\geq0)=1$ for every member of the class.

    It remains to derive the threshold infimum. For $0<t<b$, the exact worst-case tail probability is at most $\alpha$ if and only if $2/(t+b)\leq\alpha$, equivalently $t\geq2/\alpha-b$. If $b\leq1/\alpha$, then $2/\alpha-b\geq b$, so no threshold in $(0,b)$ is valid and the smallest valid threshold is $b$. If $1/\alpha<b<2/\alpha$, then $2/\alpha-b\in(0,b)$ is the smallest valid threshold. Finally, if $b\geq2/\alpha$, then every $t\in(0,b)$, and therefore every $t>0$, is valid, so the threshold infimum is zero. These three cases give the stated expression.

    In the last regime, the infimum is not attained by the rule $E\geq t$, since $\mathbb P(E\geq0)=1>\alpha$. As $t\downarrow0$, the events $\{E\geq t\}$ increase to $\{E>0\}$, so continuity of probability and the tail bound give
    \begin{equation*}
        \mathbb P(E>0)\leq\min\left\{1,\frac{2}{b}\right\}.
    \end{equation*}
    The mixture $(1-p)\delta_0+p\,\mathrm{Unif}(0,b)$ from the first paragraph attains this bound, since its positive mass is $p=\min\{1,2/b\}$. Thus $E>0$ is a valid rejection rule when $b\geq2/\alpha$.
\end{proof}

\subsection{Proof of \texorpdfstring{\Cref{thm:lipschitz_threshold}}{Theorem \getrefnumber{thm:lipschitz_threshold}}}
\label{app:threshold-proof}
In this section, we provide the detailed proofs for the Lipschitz-constrained results presented in \Cref{thm:lipschitz_threshold}. We begin with a technical lemma establishing the minimum first moment of a Lipschitz function, which forms the basis for the main theorem.

\begin{proof}[Proof of \Cref{lemma:lipschitz}]
{If $m=0$, nonnegativity and continuity imply $f\equiv0$, so the inequality and equality characterization hold. Hence assume $m>0$. Let}
\[
h=\sqrt{2Lm},\qquad n=\frac{h}{L}=\sqrt{\frac{2m}{L}}\le t,
\qquad s=\int_n^t f(x)\,\mathrm{d}x,
\]
and define $f_1=f+s/n$ on $[0,n]$ and $f_1=0$ on $(n,t]$.
Then $\int_0^n f_1=m$, $f_1$ is $L$-Lipschitz on $[0,n]$, and
\begin{equation}
\label{eq:mass-moving-gap}
\int_0^t xf(x)\,\mathrm{d}x-\int_0^n xf_1(x)\,\mathrm{d}x
=\int_n^t xf(x)\,\mathrm{d}x-\frac{s}{n}\int_0^n x\,\mathrm{d}x
\ge \frac{ns}{2}\ge0.
\end{equation}

Set $f^*(x)=h-Lx$ on $[0,n]$. Since
$\int_0^n f^*(x)\,\mathrm{d}x=h^2/(2L)=m$, the function
$g=f_1-f^*$ has zero integral. Moreover, for $0\le x<y\le n$,
\[
g(y)-g(x)=f_1(y)-f_1(x)+L(y-x)\ge0,
\]
so $g$ is nondecreasing. Chebyshev's integral inequality therefore gives
$\int_0^n xg(x)\,\mathrm{d}x\ge0$. Combining this with
\eqref{eq:mass-moving-gap},
\[
\int_0^t xf(x)\,\mathrm{d}x
\ge \int_0^n x(h-Lx)\,\mathrm{d}x
=\frac{h^3}{6L^2}
=\frac{\sqrt2}{3\sqrt L}m^{3/2}.
\]

For equality, \eqref{eq:mass-moving-gap} forces $s=0$. The Chebyshev
gap satisfies
\[
\int_0^n xg(x)\,\mathrm{d}x
=\frac{1}{2n}\int_0^n\!\int_0^n
(x-y)\bigl(g(x)-g(y)\bigr)\,\mathrm{d}x\,\mathrm{d}y.
\]
Its integrand is nonnegative, and equality forces the continuous,
nondecreasing function $g$ to be constant; since $\int_0^n g=0$, this
constant is zero. Hence $f=h-Lx$ on $[0,n]$ and $f=0$ on $(n,t]$.
Conversely, $f(x)=\max\{h-Lx,0\}$ is $L$-Lipschitz, has mass $m$, and
its first moment is $h^3/(6L^2)$, so it attains equality.
\end{proof}

\begin{proof}[Proof of~\Cref{thm:lipschitz_threshold}]
For $L\ge4/9$,
\[
T_\alpha^{(L)}\ge\sqrt{\frac{2(1-\alpha)}{L}},
\]
because this is equivalent to
$L\ge2(1-\alpha)(1+2\alpha)^2/9$, whose right-hand side is at most
$4/9$.

Fix $E\in\mathcal E_L$ with density $f$ and let
$\alpha'=\mathbb P(E\ge t)$. First take $t=T_\alpha^{(L)}$ and suppose,
for contradiction, that $\alpha'>\alpha$. Then
$t\ge\sqrt{2(1-\alpha')/L}$, so \Cref{lemma:lipschitz} applies on
$[0,t]$. With
$c_L=\sqrt2/(3\sqrt L)$,
\[
1\ge\mathbb E[E]
\ge c_L(1-\alpha')^{3/2}+t\alpha',
\]
where the two terms lower-bound the moments on $[0,t]$ and
$[t,\infty)$, respectively. Hence
\[
t\le G(\alpha'),\qquad
G(u)=\frac{1-c_L(1-u)^{3/2}}{u}.
\]
Moreover,
\[
G'(u)=\frac{-1+c_L\sqrt{1-u}(1+u/2)}{u^2}<0,
\]
since
$[\sqrt{1-u}(1+u/2)]^2=1-3u^2/4-u^3/4\le1$ and
$c_L\le1/\sqrt2<1$. Thus
\[
t\le G(\alpha')<G(\alpha)=T_\alpha^{(L)},
\]
which is impossible because $t=T_\alpha^{(L)}$. Therefore
$\mathbb P(E\ge T_\alpha^{(L)})\le\alpha$.

Finally, if $t>T_\alpha^{(L)}$ and $\alpha'\ge\alpha$, the same moment
argument gives $t\le G(\alpha')\le G(\alpha)=T_\alpha^{(L)}$, a
contradiction. Hence $\mathbb P(E\ge t)<\alpha$ for every
$t>T_\alpha^{(L)}$.
\end{proof}

\subsection{Proof of \texorpdfstring{\Cref{thm:lip_property}}{Theorem \getrefnumber{thm:lip_property}}}
\label{app:sharpness-proof}
\begin{proof}
Write
\begin{equation*}
    T_u^{(L)}
    =
    \frac{1}{u}
    \left(
        1-\frac{\sqrt2}{3\sqrt L}(1-u)^{3/2}
    \right),
    \qquad
    t_m(u,L)
    =
    T_u^{(L)}-\sqrt{\frac{u}{L}} .
\end{equation*}
{Define $\kappa_L \triangleq 1-\frac{\sqrt2}{3\sqrt L}>0$ because $L>4/9$, and set}
\begin{equation*}
    {u_0(L)\triangleq \frac{\kappa_L\sqrt L}{1+\sqrt2}.}
\end{equation*}
Fix an alternative threshold satisfying
\begin{equation*}
    T_{\mathrm{new}}\le T_\alpha^{(L)}-\varepsilon_L .
\end{equation*}
Since $\sqrt{\alpha/L}\to0$ as $\alpha\to0$, for all sufficiently
small $\alpha$ we have
\begin{equation*}
    t_m(\alpha,L)
    =
    T_\alpha^{(L)}-\sqrt{\frac{\alpha}{L}}
    \ge
    T_\alpha^{(L)}-\frac{\varepsilon_L}{2}.
\end{equation*}
Therefore
\begin{equation*}
    T_{\mathrm{new}}
    \le
    T_\alpha^{(L)}-\varepsilon_L
    \le
    t_m(\alpha,L)-\frac{\varepsilon_L}{2}.
\end{equation*}
Since $t_m(u,L)$ is continuous in $u$, we may choose
$\alpha_1>\alpha$ sufficiently close to $\alpha$ such that
\begin{equation*}
    t_m(\alpha_1,L)\ge T_{\mathrm{new}} .
\end{equation*}
Moreover, because the theorem assumes $\alpha$ is sufficiently small, we can
{choose $\alpha_1$ additionally satisfying $\alpha_1<\min\{1,u_0(L)\}$. Then}
\begin{equation*}
    t_m(\alpha_1,L)>\sqrt{\frac{2(1-\alpha_1)}{L}},
\end{equation*}
which ensures the head and tail supports are disjoint.
We now construct the separated extremal distribution with tail mass
$\alpha_1$.

Define the head density by
\begin{equation*}
    f_{\mathrm{head}}(x)
    =
    \max\{\sqrt{2L(1-\alpha_1)}-Lx,0\}.
\end{equation*}
It has mass $1-\alpha_1$ and first moment
\begin{equation*}
    \int_0^\infty x f_{\mathrm{head}}(x)\,\mathrm{d}x
    =
    \frac{\sqrt2}{3\sqrt L}(1-\alpha_1)^{3/2}.
\end{equation*}
For the tail, set $a_{\alpha_1}=\sqrt{\alpha_1/L}$ and define
\begin{equation*}
    f_{\mathrm{tail}}(x)
    =
    \begin{cases}
        L(x-t_m(\alpha_1,L)),
        & t_m(\alpha_1,L)\le x\le t_m(\alpha_1,L)+a_{\alpha_1},\\
        L(t_m(\alpha_1,L)+2a_{\alpha_1}-x),
        & t_m(\alpha_1,L)+a_{\alpha_1}<x\le t_m(\alpha_1,L)+2a_{\alpha_1},\\
        0, & \text{otherwise}.
    \end{cases}
\end{equation*}
The head support is
\begin{equation*}
    \mathrm{supp}(f_{\mathrm{head}})
    =
    \left[0,\sqrt{\frac{2(1-\alpha_1)}{L}}\right],
\end{equation*}
and the tail starts at $t_m(\alpha_1,L)$. We now verify separation.
For any $u\in(0,1)$, since $(1-u)^{3/2}\le 1$, we have
\begin{equation*}
    {T_u^{(L)}
    \ge
    \frac{1}{u}\left(1-\frac{\sqrt2}{3\sqrt L}\right)
    =
    \frac{\kappa_L}{u},}
\end{equation*}
hence
\begin{equation*}
    {t_m(u,L)
    =
    T_u^{(L)}-\sqrt{\frac{u}{L}}
    \ge
    \frac{\kappa_L}{u}-\sqrt{\frac{u}{L}}
    \ge
    \frac{\kappa_L}{u}-\frac{1}{\sqrt L}.}
\end{equation*}
Therefore, if $u<u_0(L)$, then
\begin{equation*}
    t_m(u,L)
    >
    \frac{1+\sqrt2}{\sqrt L}-\frac{1}{\sqrt L}
    =
    \sqrt{\frac{2}{L}}
    \ge
    \sqrt{\frac{2(1-u)}{L}}.
\end{equation*}
Because $\alpha$ is sufficiently small, we can choose $\alpha_1>\alpha$
close enough to $\alpha$ while still satisfying $\alpha_1<u_0(L)$.
Thus
\begin{equation*}
    t_m(\alpha_1,L)>\sqrt{\frac{2(1-\alpha_1)}{L}},
\end{equation*}
and hence
\begin{equation*}
    \mathrm{supp}(f_{\mathrm{head}})
    \cap
    \mathrm{supp}(f_{\mathrm{tail}})
    =
    \varnothing .
\end{equation*}

The tail has mass $\alpha_1$, left endpoint $t_m(\alpha_1,L)$, and first
moment
\begin{equation*}
    \int_0^\infty x f_{\mathrm{tail}}(x)\,\mathrm{d}x
    =
    \alpha_1 t_m(\alpha_1,L)
    +
    \frac{\alpha_1^{3/2}}{\sqrt L}.
\end{equation*}
Hence $f^\ast=f_{\mathrm{head}}+f_{\mathrm{tail}}$ has total mass one and
\begin{equation*}
\begin{aligned}
    \mathbb E[E]
    &=
    \frac{\sqrt2}{3\sqrt L}(1-\alpha_1)^{3/2}
    +
    \alpha_1 t_m(\alpha_1,L)
    +
    \frac{\alpha_1^{3/2}}{\sqrt L}  \\
    &=
    \frac{\sqrt2}{3\sqrt L}(1-\alpha_1)^{3/2}
    +
    \alpha_1
    \left(
        T_{\alpha_1}^{(L)}-\sqrt{\frac{\alpha_1}{L}}
    \right)
    +
    \frac{\alpha_1^{3/2}}{\sqrt L}
    =
    1.
\end{aligned}
\end{equation*}
Thus $f^\ast$ defines a valid separated $L$-Lipschitz e-variable
construction. Moreover, $f^\ast$ is globally $L$-Lipschitz: it is a
continuous piecewise-affine function with segment slopes in
$\{-L,0,+L\}$, hence $|(f^\ast)'(x)|\le L$ almost everywhere.
Therefore, for any $x<y$,
\begin{equation*}
|f^\ast(y)-f^\ast(x)|
=
\left|\int_x^y (f^\ast)'(u)\,\mathrm{d}u\right|
\le
\int_x^y |(f^\ast)'(u)|\,\mathrm{d}u
\le
L|y-x|.
\end{equation*}

Finally, since $T_{\mathrm{new}}\le t_m(\alpha_1,L)$, the entire tail
mass lies above $T_{\mathrm{new}}$. Therefore
\begin{equation*}
    \Pr(E\ge T_{\mathrm{new}})
    \ge
    \alpha_1
    >
    \alpha .
\end{equation*}
Hence every threshold satisfying
\begin{equation*}
    T_{\mathrm{new}}\le T_\alpha^{(L)}-\varepsilon_L
\end{equation*}
is invalid for level $\alpha$, because a valid $L$-Lipschitz
e-variable can be constructed whose rejection probability exceeds
$\alpha$. Consequently, $T_\alpha^{(L)}$ cannot be lowered by any fixed
$L$-dependent additive margin while maintaining uniform
level-$\alpha$ validity in the small-$\alpha$ regime.
\end{proof}

\subsection{Proof of \texorpdfstring{\Cref{prop:lipschitz_fdr}}{Proposition \getrefnumber{prop:lipschitz_fdr}}}

\begin{proof}
    We reduce the e-LC step-up procedure to the classical BH procedure applied to induced p-values. Throughout the proof, the independence assumption is the stated mutual independence of the e-values $(E_1,\ldots,E_K)$.
    {For $u\in(0,1)$, \Cref{thm:lipschitz_threshold} directly gives}
    \begin{equation*}
        {\mathbb P_0(E_i\ge T_u^{(L)})\le u,\qquad i\in\mathcal H_0.}
    \end{equation*}
    {Extend the formula by $T_1^{(L)}=1$, for which the bound by $1$ is trivial. For each hypothesis, define the induced p-value}
    \begin{equation*}
        \tilde p_i = \inf\left(\left\{u\in(0,1]:E_i \ge T_u^{(L)}\right\}\cup\{1\}\right).
    \end{equation*}
    We first show that each true-null $\tilde p_i$ is super-uniform. Fix $i\in\mathcal H_0$ and $x\in(0,1)$. For any $\varepsilon>0$ such that $x+\varepsilon<1$, we claim that
    \begin{equation*}
        \{\tilde p_i\le x\}\subseteq\{E_i\ge T_{x+\varepsilon}^{(L)}\}.
    \end{equation*}
    On the event $\{\tilde p_i\le x\}$, the set $\left\{u\in(0,1]:E_i \ge T_u^{(L)}\right\}\cup\{1\}$ is nonempty and has infimum at most $x$. Therefore there exists $v$ in this set such that $v\le x+\varepsilon$. Since $x+\varepsilon<1$, this $v$ cannot be the added singleton $1$. Hence $v\in(0,1]$ and $E_i\ge T_v^{(L)}$. By monotonicity of $u\mapsto T_u^{(L)}$, $v\le x+\varepsilon$ implies
    \begin{equation*}
        T_v^{(L)}\ge T_{x+\varepsilon}^{(L)}.
    \end{equation*}
    Thus $E_i\ge T_{x+\varepsilon}^{(L)}$, proving the inclusion. By exact-threshold validity,
    \begin{equation*}
        \mathbb P_0(\tilde p_i\le x) \le \mathbb P_0(E_i\ge T_{x+\varepsilon}^{(L)}) \le x+\varepsilon.
    \end{equation*}
    Letting $\varepsilon\to 0^+$ gives
    \begin{equation*}
        \mathbb P_0(\tilde p_i\le x)\le x, \quad x\in(0,1).
    \end{equation*}
    Since $\mathbb P_0(\tilde p_i\le1)=1$, $\tilde p_i$ is super-uniform.
    Next, we prove the exact threshold equivalence. Fix $a\in(0,1)$. We claim that, for every $i$,
    \begin{equation*}
        E_i\ge T_a^{(L)} \Longleftrightarrow \tilde p_i\le a.
    \end{equation*}
    If $E_i\ge T_a^{(L)}$, then $a$ belongs to $\{u\in(0,1]:E_i\ge T_u^{(L)}\}$, and therefore $\tilde p_i\le a$. Conversely, suppose $\tilde p_i\le a$. For every $\varepsilon>0$ such that $a+\varepsilon<1$, the definition of the infimum gives some $v\in\left\{u\in(0,1]:E_i \ge T_u^{(L)}\right\}\cup\{1\}$ satisfying $v\le a+\varepsilon$. Since $a+\varepsilon<1$, this $v$ cannot be $1$. Hence $v\in(0,1]$ and $E_i\ge T_v^{(L)}$. By monotonicity,
    \begin{equation*}
        E_i\ge T_v^{(L)} \ge T_{a+\varepsilon}^{(L)}.
    \end{equation*}
    Therefore $E_i\ge T_{a+\varepsilon}^{(L)}$ for every sufficiently small $\varepsilon>0$. Since $u\mapsto T_u^{(L)}$ is right-continuous at $a$, letting $\varepsilon\to 0^+$ yields
    \begin{equation*}
        E_i\ge T_a^{(L)}.
    \end{equation*}
    This proves the equivalence.
    Let $E_{(1)}\ge\cdots\ge E_{(K)}$ denote the ordered e-values, and let $\tilde p_{(1)}\le\cdots\le\tilde p_{(K)}$ denote the ordered induced p-values. The threshold equivalence implies that, for every $a\in(0,1)$,
    \begin{equation*}
        \#\{i:E_i\ge T_a^{(L)}\} = \#\{i:\tilde p_i\le a\}.
    \end{equation*}
    Hence, for every $k\in\{1,\ldots,K\}$, since $k \alpha / K \in (0, 1)$,
    \begin{equation*}
        E_{(k)} \ge T_{k \alpha/K}^{(L)} \Longleftrightarrow \tilde p_{(k)}\le \frac{k\alpha}{K}.
    \end{equation*}
    Therefore the e-LC step-up index satisfies
    \begin{equation*}
        {R_{\mathrm{LC}} = \max\left(\{0\}\cup\left\{k \in \mathcal K :\tilde p_{(k)}\le \frac{k\alpha}{K}\right\}\right),}
    \end{equation*}
    which is exactly the BH step-up index applied to $\tilde p_1,\ldots,\tilde p_K$.
    We use the rejection convention
    \begin{equation*}
        \mathcal R_{\mathrm{LC}} = \left\{i:E_i\ge T_{R_{\mathrm{LC}}\alpha/K}^{(L)}\right\}
    \end{equation*}
    when $R_{\mathrm{LC}}>0$, and $\mathcal R_{\mathrm{LC}}=\emptyset$ when $R_{\mathrm{LC}}=0$. If $R_{\mathrm{LC}}>0$, the threshold equivalence with $a=R_{\mathrm{LC}}\alpha/K$ gives
    \begin{equation*}
        \mathcal R_{\mathrm{LC}} = \left\{i:\tilde p_i\le \frac{R_{\mathrm{LC}}\alpha}{K}\right\}.
    \end{equation*}
    Thus e-LC has exactly the same rejection set as BH applied to the induced p-values.
    It remains to verify the assumptions needed for the BH theorem under independence. We have already shown that every true-null induced p-value is super-uniform. Moreover, the map $x\mapsto\inf\left(\left\{u\in(0,1]:x \ge T_u^{(L)}\right\}\cup\{1\}\right)$ is measurable, so each $\tilde p_i$ is a measurable function of $E_i$. Since $(E_1,\ldots,E_K)$ are mutually independent by assumption, the induced p-values $(\tilde p_1,\ldots,\tilde p_K)$ are also mutually independent. Therefore, the classical BH theorem under independence applies and gives
    \begin{equation*}
        \mathrm{FDR}(\mathcal R_{\mathrm{LC}}) \le \frac{K_0}{K}\alpha.
    \end{equation*}
    Since $K_0\le K$, we also have $\mathrm{FDR}(\mathcal R_{\mathrm{LC}})\le\alpha$.
    Finally, the threshold comparison follows from $T_u^{(L)}\le 1/u$ for all $u\in(0,1)$. Substituting $u=k\alpha/K$ gives
    \begin{equation*}
        T_{k\alpha/K}^{(L)} \le \frac{K}{k\alpha}, \quad k \in \mathcal K.
    \end{equation*}
    Hence the e-LC threshold sequence is uniformly no larger than the e-BH threshold sequence.
\end{proof}

\subsection{Proof of \texorpdfstring{\Cref{thm:lipschitz-truncated-ebh}}{Theorem \getrefnumber{thm:lipschitz-truncated-ebh}}}
\label{app:lip-truncated-ebh-proof}

\begin{proof}[Proof of \Cref{thm:lipschitz-truncated-ebh}]
    Recall that
    \begin{equation*}
        C_L=\frac{\sqrt{2}}{3\sqrt L}, \quad d_{\alpha,L}=1-C_L(1-\alpha)^{3/2}, \quad T_\alpha^{(L)}=\frac{d_{\alpha,L}}{\alpha},
    \end{equation*}
    and
    \begin{equation*}
        b_{\alpha,L}(e)=\frac{e}{d_{\alpha,L}}\mathbf{1}\{e\ge T_\alpha^{(L)}\}, \quad E_i^{\mathfrak b}=b_{\alpha,L}(E_i).
    \end{equation*}
    Fix a true null and write $E$ for its original e-value, $E^{\mathfrak b}=b_{\alpha,L}(E)$ for its boosted e-value, and $f$ for its null density. Let
    \begin{equation*}
        t=T_\alpha^{(L)}.
    \end{equation*}
    By \Cref{thm:lipschitz_threshold},
    \begin{equation*}
        \mathbb P_0(E\ge t)\le\alpha.
    \end{equation*}
    Define the head mass
    \begin{equation*}
        m=\int_0^t f(x)dx.
    \end{equation*}
    Then $m=\mathbb P_0(E<t)\ge1-\alpha$. From the proof of \Cref{thm:lipschitz_threshold}, the condition $L\ge4/9$ implies
    \begin{equation*}
        T_\alpha^{(L)}\ge \sqrt{\frac{2(1-\alpha)}{L}}.
    \end{equation*}
    Since $t=T_\alpha^{(L)}$, this is equivalent to
    \begin{equation*}
        1-\alpha\le \frac{Lt^2}{2}.
    \end{equation*}
    Set $c=(1-\alpha)/m\le1$ and $g=cf$ on $[0,t]$. Then $g$ is nonnegative and $L$-Lipschitz, and it has mass $1-\alpha$. The preceding display is precisely the condition needed to apply \Cref{lemma:lipschitz} to $g$. Since $g\le f$,
    \begin{equation*}
        \int_0^t xf(x)dx
        \ge \int_0^t xg(x)dx
        \ge C_L(1-\alpha)^{3/2}.
    \end{equation*}
    Using the e-value condition $\mathbb E_0[E]\le1$,
    \begin{equation*}
        \mathbb E_0[E\mathbf{1}\{E\ge t\}]=\mathbb E_0[E]-\int_0^t xf(x)dx\le 1-C_L(1-\alpha)^{3/2}=d_{\alpha,L}.
    \end{equation*}
    Thus
    \begin{equation*}
        \mathbb E_0[E^{\mathfrak b}]=\mathbb E_0\left[\frac{E}{d_{\alpha,L}}\mathbf{1}\{E\ge T_\alpha^{(L)}\}\right]\le1.
    \end{equation*}
    Hence $E^{\mathfrak b}$ is a valid e-value under the true null.

    Now apply standard e-BH at level $\alpha$ to $E_1^{\mathfrak b},\ldots,E_K^{\mathfrak b}$. Let
    \begin{equation*}
        R=\max\left\{k\in\{1,\ldots,K\}:E_{(k)}^{\mathfrak b}\ge \frac{K}{\alpha k}\right\},
    \end{equation*}
    with no rejections if the set is empty. Let $\mathcal R$ be the rejected set and let $V=|\mathcal R\cap\mathcal H_0|$. If $R=0$, then $V/(R\vee1)=0$. If $R>0$ and $i\in\mathcal R\cap\mathcal H_0$, then
    \begin{equation*}
        E_i^{\mathfrak b}\ge \frac{K}{\alpha R}.
    \end{equation*}
    Hence
    \begin{equation*}
        \frac{\mathbf{1}\{i\in\mathcal R\}}{R\vee1}\le \frac{\alpha E_i^{\mathfrak b}}{K}.
    \end{equation*}
    Summing over true nulls gives
    \begin{equation*}
        \frac{V}{R\vee1}\le \sum_{i\in\mathcal H_0}\frac{\alpha E_i^{\mathfrak b}}{K}.
    \end{equation*}
    Taking expectations,
    \begin{equation*}
        \mathrm{FDR}=\mathbb E\left[\frac{V}{R\vee1}\right]\le \frac{\alpha}{K}\sum_{i\in\mathcal H_0}\mathbb E_0[E_i^{\mathfrak b}]\le \frac{K_0}{K}\alpha\le\alpha.
    \end{equation*}
    This argument does not use independence or positive dependence, so the result holds under arbitrary dependence.

    Finally, since $b_{\alpha,L}$ is nondecreasing, $(E^{\mathfrak b})_{(k)}=b_{\alpha,L}(E_{(k)})$. The e-BH condition
    \begin{equation*}
        (E^{\mathfrak b})_{(k)}\ge \frac{K}{\alpha k}
    \end{equation*}
    is equivalent to
    \begin{equation*}
        E_{(k)}\ge \frac{d_{\alpha,L}K}{\alpha k},
    \end{equation*}
    because
    \begin{equation*}
        \frac{d_{\alpha,L}K}{\alpha k}\ge \frac{d_{\alpha,L}}{\alpha}=T_\alpha^{(L)}
    \end{equation*}
    for every $k\le K$. Hence the original-scale thresholds are
    \begin{equation*}
        t_k^{\mathfrak b}=\frac{d_{\alpha,L}K}{\alpha k}, \quad k \in \mathcal K.
    \end{equation*}
    Since $0<d_{\alpha,L}<1$ under $L\ge4/9$,
    \begin{equation*}
        t_k^{\mathfrak b}<\frac{K}{\alpha k}.
    \end{equation*}
    Moreover, to compare with the original e-LC threshold, write
    \begin{equation*}
        T_{k\alpha/K}^{(L)}
        =
        \frac{K}{k\alpha}
        \left(
            1-C_L\left(1-\frac{k\alpha}{K}\right)^{3/2}
        \right),
        \qquad
        t_k^{\mathfrak b}
        =
        \frac{K}{k\alpha}
        \left(
            1-C_L(1-\alpha)^{3/2}
        \right).
    \end{equation*}
    Since $k\alpha/K\le\alpha$, we have
    $1-k\alpha/K\ge 1-\alpha$, hence
    \begin{equation*}
        \left(1-\frac{k\alpha}{K}\right)^{3/2}\ge (1-\alpha)^{3/2},
    \end{equation*}
    and therefore
    \begin{equation*}
        1-C_L\left(1-\frac{k\alpha}{K}\right)^{3/2}
        \le
        1-C_L(1-\alpha)^{3/2}.
    \end{equation*}
    Multiplying by $K/(k\alpha)>0$ yields
    \begin{equation*}
        T_{k\alpha/K}^{(L)}\le t_k^{\mathfrak b}.
    \end{equation*}
    Therefore
    \begin{equation*}
        T_{k\alpha/K}^{(L)}\le t_k^{\mathfrak b}\le \frac{K}{\alpha k}, \quad k \in \mathcal K.
    \end{equation*}
\end{proof}

\subsection{Proof of \texorpdfstring{\Cref{lemma:supermartingale}}{Lemma \getrefnumber{lemma:supermartingale}}}
\label{app:lemma-supermartingale-proof}
\begin{proof}[Proof of \Cref{lemma:supermartingale}]
    Let
    \begin{equation*}
        M_0=1,\qquad M_t=\prod_{s=1}^t \phi_s(E_s),\quad t\ge 1.
    \end{equation*}
    {Since each $\phi_s$ is nonnegative, we have $M_t\ge 0$ almost surely. Let $\mathcal D_L$ denote the densities of variables in $\mathcal E_L$, and let $\mathcal D_L^0$ denote those with $f(0)=0$.}
    It remains to verify the supermartingale inequality.

    Fix $t\ge 1$. By assumption, conditionally on $\mathcal F_{t-1}$, the
    law of $E_t$ belongs to $\mathcal E_L$ and satisfies
    $\mathbb E[E_t\mid\mathcal F_{t-1}]\le 1$. Therefore, by definition of
    \begin{equation*}
        {\Psi_L(\phi_t)
        =
        \sup_{f\in\mathcal D_L,\,\mathbb E_f[X]\le 1}
        \int_0^\infty \phi_t(x)f(x)\,\mathrm dx,}
    \end{equation*}
    the condition $\Psi_L(\phi_t)\le 1$ implies
    \begin{equation*}
        \mathbb E[\phi_t(E_t)\mid\mathcal F_{t-1}] \le 1
    \end{equation*}
    almost surely.

    Using the multiplicative update $M_t=M_{t-1}\phi_t(E_t)$ and the fact
    that $M_{t-1}$ is $\mathcal F_{t-1}$-measurable,
    \begin{equation*}
        \mathbb E[M_t\mid\mathcal F_{t-1}]
        =
        \mathbb E[M_{t-1}\phi_t(E_t)\mid\mathcal F_{t-1}]
        =
        M_{t-1}\,\mathbb E[\phi_t(E_t)\mid\mathcal F_{t-1}]
        \le
        M_{t-1}.
    \end{equation*}
    {Induction gives $\mathbb E[M_t]\le1$, so each $M_t$ is integrable. Hence $(M_t)_{t\ge 0}$ is a nonnegative supermartingale with respect to}
    $(\mathcal F_t)$.
\end{proof}

\subsection{Proof of \texorpdfstring{\Cref{thm:process-based}}{Theorem \getrefnumber{thm:process-based}}}
\label{app:process-based-proof}
\begin{proof}[Proof of~\Cref{thm:process-based}]
We show that the linear update family
$\phi_{\lambda,\eta,\tau}(x) \triangleq (1-\lambda)+\lambda \eta (x-\tau)_+$
admits an explicit structural calibration under the boundary-restricted
Lipschitz class $\mathcal E_L^0$, and that the resulting multiplicative
process is anytime valid.
We first derive an explicit lower bound on $L$.
{Fix any $f\in\mathcal D_L^0$ with $\mathbb E_f[X]\le 1$.}
Since $f(0)=0$ and $f$ is $L$-Lipschitz, we have
$f(x)\le Lx$ for all $x\ge 0$. Hence, for
$F(x)\triangleq \int_0^x f(u)\,\mathrm du$,
\begin{equation*}
    F(x)\le \int_0^x Lu\,\mathrm du=\frac{Lx^2}{2}.
\end{equation*}
Using the tail-integral identity for nonnegative $X$,
\begin{equation*}
\begin{aligned}
    \mathbb E_f[X]
    &=\int_0^\infty \!\bigl(1-F(x)\bigr) \mathrm dx \ge \int_0^\infty \!\left(1-\min\left\{1,\frac{Lx^2}{2}\right\}\right)\mathrm{d}x \\
    &=\int_0^{\sqrt{2/L}} \!\left(1-\frac{Lx^2}{2}\right)\mathrm{d}x
    =\frac{2\sqrt2}{3\sqrt L}.
\end{aligned}
\end{equation*}
Since $\mathbb E_f[X]\le 1$, we must have $\frac{2\sqrt2}{3\sqrt L}\le 1$, so $L\ge \frac{8}{9}$.

Fix $\lambda \in [0,1]$, $\eta \ge 0$, and
$\tau \in [0,1/\sqrt{2L}]$. By definition of the worst-case expectation
operator $\Psi_L^0$,
\begin{equation*}
    {\Psi_L^0(\phi_{\lambda,\eta,\tau})
    =
    \sup_{f \in \mathcal D_L^0,\, \mathbb E_f[X]\le 1}
    \int_0^\infty
    \bigl((1-\lambda)+\lambda \eta (x-\tau)_+\bigr)
    f(x)\,\mathrm{d}x .}
\end{equation*}
Using $\int_0^\infty f(x)\mathrm{d}x=1$ for any density $f$, we obtain
\begin{equation*}
    {\Psi_L^0(\phi_{\lambda,\eta,\tau})
    =
    (1-\lambda)+\lambda\eta
    \sup_{f \in \mathcal D_L^0,\, \mathbb E_f[X]\le 1}
    \mathbb E_f[(X-\tau)_+] .}
\end{equation*}
Therefore, it suffices to upper bound
\begin{equation*}
    {H_L^0(\tau)
    \triangleq
    \sup_{f \in \mathcal D_L^0,\, \mathbb E_f[X]\le 1}
    \mathbb E_f[(X-\tau)_+] .}
\end{equation*}

{For $\tau=0$, the desired positive-part bound is simply $\mathbb E_f[X]\le1$. Hence assume $\tau>0$.}
{Fix any $f\in\mathcal D_L^0$ with $\mathbb E_f[X]\le 1$, and set}
\begin{equation*}
    q\triangleq \int_0^\tau f(x)\,\mathrm{d}x
    =
    \mathbb P_f(X<\tau).
\end{equation*}
Since $f(0)=0$ and $f$ is $L$-Lipschitz, we have $f(x)\le Lx$ for
all $x\ge0$. Hence
\begin{equation*}
    q
    =
    \int_0^\tau f(x)\,\mathrm{d}x
    \le
    \int_0^\tau Lx\,\mathrm{d}x
    =
    \frac{L\tau^2}{2}.
\end{equation*}
Equivalently, $\tau\ge \sqrt{2q/L}$. Therefore the condition of
\Cref{lemma:lipschitz} is satisfied on the interval $[0,\tau]$ with
head mass $q$, and we get
\begin{equation*}
    \int_0^\tau x f(x)\,\mathrm{d}x
    \ge
    \frac{\sqrt2}{3\sqrt L}q^{3/2}.
\end{equation*}

Now decompose the positive-part expectation as
\begin{equation*}
    \mathbb E_f[(X-\tau)_+]
    =
    \mathbb E_f[X]-\tau+\int_0^\tau(\tau-x)f(x)\,\mathrm{d}x .
\end{equation*}
Since $\mathbb E_f[X]\le 1$, this gives
\begin{equation*}
    \mathbb E_f[(X-\tau)_+]
    \le
    1-\tau+\tau q-\int_0^\tau x f(x)\,\mathrm{d}x .
\end{equation*}
Applying the preceding lower bound,
\begin{equation*}
    \mathbb E_f[(X-\tau)_+]
    \le
    1-\tau+\tau q-\frac{\sqrt2}{3\sqrt L}q^{3/2}.
\end{equation*}
On the interval $0\le q\le L\tau^2/2$, the function
$\tau q-\frac{\sqrt2}{3\sqrt L}q^{3/2}$ is increasing, since
its derivative is
$\tau-\frac{\sqrt2}{2\sqrt L}\sqrt q\ge \tau/2\ge0$.
Therefore its maximum over this interval is attained at
$q=L\tau^2/2$, and hence
\begin{equation*}
    \tau q-\frac{\sqrt2}{3\sqrt L}q^{3/2}
    \le
    \frac{1}{3}L\tau^3.
\end{equation*}
Consequently,
\begin{equation*}
    H_L^0(\tau)
    \le
    1-\tau+\frac{1}{3}L\tau^3
    \eqqcolon U_L(\tau).
\end{equation*}
We now verify that $U_L(\tau)>0$ on the admissible range
$\tau\in[0,1/\sqrt{2L}]$. Since
\begin{equation*}
    U_L'(\tau)=-1+L\tau^2\le -1+\frac12=-\frac12<0
\end{equation*}
on this interval, $U_L$ is decreasing in $\tau$, so its minimum is at
$\tau=1/\sqrt{2L}$. Therefore
\begin{equation*}
    U_L(\tau)\ge U_L\!\left(\frac1{\sqrt{2L}}\right)
    =
    1-\frac{5}{6\sqrt{2L}}.
\end{equation*}
Since $L\ge 8/9$, we have $\sqrt{2L}\ge \sqrt{16/9}=4/3$, hence
\begin{equation*}
    U_L(\tau)\ge 1-\frac{5}{6\sqrt{2L}}
    \ge
    1-\frac{5}{8}>0.
\end{equation*}
Thus $U_L(\tau)$ is strictly positive on the admissible range, and
$\eta_L(\tau)=1/U_L(\tau)$ is well-defined and nonnegative.

Returning to $\Psi_L^0(\phi_{\lambda,\eta,\tau})$, we obtain
\begin{equation*}
    \Psi_L^0(\phi_{\lambda,\eta,\tau})
    \le
    (1-\lambda)+\lambda\eta U_L(\tau).
\end{equation*}
Thus the calibration
\begin{equation*}
    \eta_L(\tau)
    =
    \frac{1}{U_L(\tau)}
    =
    \frac{1}{1-\tau+\frac{1}{3}L\tau^3}
\end{equation*}
ensures
\begin{equation*}
    \Psi_L^0(\phi_{\lambda,\eta_L(\tau),\tau})
    \le
    (1-\lambda)+\lambda
    =
    1
\end{equation*}
for all $\lambda\in[0,1]$.

Now let $\phi_t=\phi_{\lambda_t,\eta_L(\tau_t),\tau_t}$ be any
predictable sequence with $\lambda_t\in[0,1]$ and
$0\le \tau_t\le 1/\sqrt{2L}$. Under the null, conditional on
$\mathcal F_{t-1}$, suppose the e-value $E_t$ has conditional density
{in $\mathcal D_L^0$ with conditional mean at most $1$. Then}
\begin{equation*}
    \mathbb E[\phi_t(E_t)\mid \mathcal F_{t-1}] \le 1.
\end{equation*}
Defining $M_0\triangleq 1$ and
$M_t\triangleq \prod_{s=1}^t \phi_s(E_s)$, we have
$\mathbb E[M_t\mid\mathcal F_{t-1}]\le M_{t-1}$, hence
$(M_t)_{t\ge0}$ is a nonnegative supermartingale. Ville's inequality
yields, for any $\alpha\in(0,1)$,
\begin{equation*}
    \mathbb P\Bigl(\sup_{t\ge1}M_t\ge 1/\alpha\Bigr)\le\alpha.
\end{equation*}
This establishes the claimed anytime validity.
\end{proof}

\section{Full Setup for the Main Text Experiments}
\label{app:exp_setup}

This section provides the complete implementation details for the numerical experiments in \Cref{sec:experiments}, complementing the concise descriptions in the main text. The setup is organized into two parts corresponding to the fixed-sample multiple-testing simulation (\Cref{subsec:exp_multiple}) and the sequential real-data experiment (\Cref{subsec:exp_sequential}).

\subsection{Multiple-Testing Simulation Setup}

The simulation aims to examine the power performance of the Lipschitz calibration in fixed-sample multiple testing. All procedures are applied to the same set of e-values, differing only in their threshold calibration. Competing methods include the standard Markov-calibrated e-BH~\citep{wang2022false} and the monotone benchmark~\citep{blier2024improved}.

\textbf{Data Generation.} Null e-values are generated from a normalized two-segment piecewise-linear profile. To distinguish density height from the null proportion, let $p_{\mathrm{ht}}$ denote the unnormalized height at zero and define
\begin{equation*}
\widetilde f(x)=p_{\mathrm{ht}}
\begin{cases}
1-k_1x, & 0\le x\le x_1,\\
1-k_1x_1-k_2(x-x_1), & x_1<x\le x_2,\\
0, & \text{otherwise},
\end{cases}
\end{equation*}
where $x_2=x_1+(1-k_1x_1)/k_2$ makes the two pieces continuous and brings the profile to zero. The null density is $f_0=\widetilde f/N$, where $N=\int\widetilde f$, and its Lipschitz constant is
\begin{equation*}
L=\frac{p_{\mathrm{ht}}}{N}\max\{k_1,k_2\}.
\end{equation*}
Candidate parameters satisfy $p_{\mathrm{ht}}>0$, $k_1,k_2,x_1>0$, and $k_1x_1<1$, which make the profile nonnegative and give $x_2>x_1$. After normalization, we retain only profiles with null mean $\mathbb E_0[E]\le1$ and $L\ge4/9$, as required for the stated e-LC calibration.
{This family is not the separated body-and-tail extremal construction used in \Cref{thm:lip_property}. Non-null e-values are sampled independently from $U(0,1000)$ to represent strong signals. Each repetition contains exactly $K=10{,}000$ hypotheses: $2{,}000$ nulls and $8{,}000$ alternatives.}

\textbf{Configurations and Calibration.} Among the feasible profiles, we selected Set 1 with $(k_1,k_2,x_1)=(0.63,0.10,1.53)$ for a visible e-LC gain over the monotonicity benchmark; Set 2 retains the valid profile $(0.62,0.44,0.20)$. This is an illustrative parameter choice, not a random sample of feasible profiles. The endpoint $x_2$ follows from the continuity formula above. Normalization gives null means approximately $0.706570$ and $0.728048$, and Lipschitz constants approximately $0.455446$ and $0.585116$. The raw height cancels on normalization. We evaluate the constants in exact rational arithmetic and round upward for calibration. These nonincreasing densities also permit the decreasing-density comparator with threshold $1/(2u)$ at rank level $u=k\alpha/K$~\citep{blier2024improved}; this comparator is distinct from the bounded nondecreasing class in Appendix~\ref{app:monotone-benchmark}.

\textbf{Evaluation Metric.} Each configuration uses 20 independent repetitions (PCG64 seed 42), with the same data used for all three methods and all 100 equally spaced levels $\alpha\in[0.02,0.10]$. Null draws use the analytic inverse CDF. We report mean rejection counts and population standard deviations across repetitions. Every method's threshold exceeds the null support throughout this grid, so all false-rejection counts are zero by construction; these panels measure discovery counts under strong alternatives, not stress-test FDR control.

\subsection{Sequential Real-Data Experiment Setup}


\begin{algorithm}[H]
\caption{LC-process procedure}
\label{alg:LC-process}
\KwIn{Conditional e-values $E_1,\ldots,E_T$ whose conditional null laws belong to $\mathcal E^0_{L_c}$, revealed sequentially; horizon $T\ge1$; level $\alpha\in(0,1)$; fixed $\lambda\in[0,1]$, $L_c\ge8/9$, and $\tau\in[0,1/\sqrt{2L_c}]$.}
\KwOut{$(D,\sigma)$: rejection indicator $D\in\{0,1\}$ and first crossing time $\sigma\in\{1,\ldots,T\}\cup\{\infty\}$; $\sigma=\infty$ means no crossing by $T$.}
$\eta \gets \bigl(1-\tau+L_c\tau^3/3\bigr)^{-1}$\;
$M \gets 1$\;
\For{$t=1$ \KwTo $T$}{
  Observe $E_t$\;
  $M \gets M\bigl[(1-\lambda)+\lambda\eta\max\{E_t-\tau,0\}\bigr]$\;
  \If{$M \ge 1/\alpha$}{
    \Return{$(1,t)$}\;
  }
}
\Return{$(0,\infty)$}\;
\end{algorithm}

This subsection provides the detailed setup for the two sequential experiments in \Cref{subsec:exp_sequential}. The Gaussian experiment is a controlled mechanistic demonstration of the suppression and amplification behavior of the LC update, while the S\&P 500 experiment provides a retrospective comparison on real financial paths. Both experiments start from $M_0=1$ and use the same Ville boundary $1/\alpha$. Algorithm~\ref{alg:LC-process} gives the theorem-backed LC procedure when the conditional assumptions of \Cref{thm:process-based} hold.

\textbf{Gaussian Mechanistic Design.} We construct a single Gaussian likelihood-ratio path
\[
E_t=\exp(\theta X_t-\theta^2/2),
\]
with $\theta=0.3$ and NumPy seed $20261007$. Under the null, $X_t\overset{\mathrm{iid}}{\sim}\mathcal{N}(0,1)$. For the mechanistic illustration, the observations are independent with
\[
X_t\sim\mathcal{N}\!\left(\frac{\log 0.025+\theta^2/2}{\theta},1\right),\qquad t\leq100,
\]
and
\[
X_t\sim\mathcal{N}\!\left(\frac{0.95+\theta^2/2}{\theta},1\right),\qquad 101\leq t\leq260.
\]
The same realized inputs are used by vanilla and by every LC calibration, so all methods share one weak-signal phase and one common strong-signal phase. Vanilla uses the one-step factor
\[
1-\lambda_t+\lambda_tE_t,
\]
while LC uses
\[
1-\lambda_t+\lambda_t\eta_L(E_t-\tau_L)_+,
\]
where
\[
\tau_L=\frac{1}{\sqrt{8L}},
\qquad
\eta_L=\left(1-\tau_L+\frac{L\tau_L^3}{3}\right)^{-1}.
\]
We use $\lambda_t=0.003$ before the phase switch and $\lambda_t=0.15$ afterward. No clipping or resetting is applied.

\textbf{Gaussian Null Model and Calibration.} Under the Gaussian null,
\[
\log E_t\sim\mathcal{N}(-\theta^2/2,\theta^2),
\qquad
\mathbb{E}_0[E_t]=1.
\]
The induced null distribution is lognormal and satisfies the boundary condition $f(0)=0$. Let
\[
L_0=\sup_{e>0}|f'(e)|=4.6535404
\]
denote its analytic Lipschitz constant. We compare
\[
L=cL_0,\qquad c\in\{0.5,1,2\}.
\]
For $c=1$ and $c=2$, $L$ is a valid upper bound on the null-density Lipschitz constant. For $c=0.5$, the working value underestimates the true Lipschitz constant, so the sufficient condition in \Cref{thm:process-based} does not apply directly. We therefore check the Gaussian null expectation
\[
\mathbb{E}_0\!\left[\eta_L(E_t-\tau_L)_+\right]\leq1
\]
directly for every reported calibration. On the displayed path, all weak-phase inputs satisfy $E_t<\tau_L$ for every LC calibration, so the truncated term is zero throughout the weak-signal phase. All strong-phase inputs satisfy $E_t>24/23$, so the calibrated LC term exceeds the vanilla term throughout the common strong-signal phase.

\textbf{Gaussian Evaluation.} All processes are evaluated on the same realized path and are compared against the Ville boundaries corresponding to $\alpha\in\{0.1,0.05\}$. We record the first crossing of $\log(1/\alpha)$ for each method. The displayed trajectory is designed to isolate the two LC mechanisms, with suppression in the weak-signal phase followed by amplification in the strong-signal phase. Parameters were tuned over $630$ configurations per panel for this mechanistic illustration. For the separate absolute-$L$ panel, we use $L\in\{2,1,0.5\}$ and set the post-switch value to $\lambda_t=0.2$, with the remaining construction unchanged.

We constructed a frozen panel of adjusted closing prices from Yahoo Finance
using \texttt{yfinance}. The ticker universe was taken from the S\&P 500
constituent list available when the panel was assembled. Its 503 ticker columns span January 3, 2017 to June 3, 2026. The universe is selected retrospectively and does not reconstruct historical index membership, so survivorship and selection effects may influence the results. The supplied panel contains no nonpositive prices. We omit \texttt{FDXF}, which has fewer than three selected prices, leaving 502 sequential stock paths. Positive observations alone do not establish either the conditional e-value mean constraint or the density boundary condition required by the sequential theory.    

\textbf{S\&P 500 Path Construction.} For each stock, we deterministically select the first and last available adjusted-close prices in each calendar year from 2017 through 2026. Of the 502 paths, 475 contain 20 selected prices and the remaining paths are shorter. After chronological ordering, consecutive selected prices define $E_{k,j}^{\mathrm{base}}=R_{k,j}=P_{k,t_j}/P_{k,t_{j-1}}$ for $j=1,\ldots,n_k$, where $n_k$ is the number of ratios for stock $k$. These intervals alternate between within-year first-to-last periods and short year-end transitions. A monitoring step therefore does not correspond to a fixed number of trading days.

\textbf{S\&P 500 Null and Update Rules.} The formal stock-level null is $\mathbb E[R_{k,j}\mid\mathcal F_{j-1}]\le1$ for every $j$, where $\mathcal F_{j-1}$ represents the available history before the current ratio is observed. The historical data do not provide verified null and alternative labels for individual stocks. We compare the vanilla factor $(1-\lambda)+\lambda E$ with the LC factor $(1-\lambda)+\lambda\eta_{L_{c,k}}(\tau_{c,k})(E-\tau_{c,k})_+$. Both processes start from one and reject at their first crossing of the common Ville boundary $1/\alpha$.

\textbf{S\&P 500 Working Lipschitz Scale.} We fit a lognormal density to each complete ratio path using the mean and sample standard deviation of its log ratios. Let $\hat L_k$ denote the maximum absolute finite-difference derivative of the fitted density over 1,000 equally spaced points on $[0.8\min_jR_{k,j},1.2\max_jR_{k,j}]$. We define the working scale as
\[
L_{c,k}=\max\{0.5,s\hat L_k\},\qquad s\in\{0.5,1,2\}.
\]
For each working scale, we use $\tau_{c,k}=0.5/\sqrt{2L_{c,k}}$ and $\eta_{L_{c,k}}(\tau_{c,k})=(1-\tau_{c,k}+L_{c,k}\tau_{c,k}^3/3)^{-1}$. The fitted grid estimate does not certify an upper bound on the true conditional null-density Lipschitz constant. Fitting the complete path also makes the working calibration depend on observations occurring after earlier updates. The empirical S\&P 500 procedure therefore does not establish the predictable-update or conditional-density assumptions of \Cref{thm:process-based}. The theorem-backed algorithm serves as a formal reference for the update construction and not as a validity certificate for these historical stock paths.

\textbf{S\&P 500 Evaluation.} We use 80 evenly spaced values of $\lambda$ in $[0.05,1]$, both $\alpha\in\{0.05,0.1\}$, and all three working scales. For each configuration, the rejection rate is the fraction of the 502 eligible stock paths whose cumulative process crosses $1/\alpha$ before the end of the available monitoring path. The same stock paths and ratio inputs are used for the vanilla and LC procedures at each value of $\lambda$ and $\alpha$. The reported curves therefore compare the frequency of Ville-boundary crossings under the different one-step update rules. Because the Lipschitz scales are retrospective plug-in quantities and the required conditional assumptions are not verified, these rejection rates are interpreted as an exploratory comparison of detection behavior and not as estimates of prospective statistical power or a validation of anytime-valid error control.

\subsubsection{{Terminal Randomization Comparison}}
\label{app:randomized-comparison}

\textbf{Reserved Returns and Monitoring Window.} Figure~\ref{fig:lc_ep_4_methods} compares ordinary and terminal randomized Ville calibration~\citep{ramdas2026randomized}. We use scale $s=1$ and restrict monitoring to $m_k=n_k-3$ steps, leaving three future ratios available for any rejection by that cutoff. Stocks \texttt{Q} and \texttt{SNDK} have only three ratios and are additionally excluded, leaving 500 paths. An early rejection still uses the three ratios immediately after its own crossing, which can precede the terminal cutoff. The LC path is truncated after fitting its calibration on all $n_k$ ratios, so the reserved future returns also enter the fit; this is not an independent calibration holdout.

Randomized Ville monitoring checks $1/\alpha$ strictly before the terminal step. If an early crossing occurs at $t_k$, its conditional rejection probability is $w_k=1$; otherwise set $t_k=m_k$ and $w_k=\min\{\alpha M_{m_k},1\}$. We integrate over the independent uniform randomization rather than simulate its draw. Expected rejection rate is $\sum_k w_k/500$, and randomized ASN is $\sum_k w_kt_k/\sum_k w_k$. For the corresponding post-rejection returns, we retain the probability-weighted summaries
\begin{equation*}
    {\bar r_w=\frac{\sum_k w_kr_k}{W},\qquad
    s_w^2=\frac{\sum_k w_k(r_k-\bar r_w)^2}{W-\sum_k w_k^2/W},
    \qquad W=\sum_k w_k,}
\end{equation*}
{and report $\bar r_w/s_w$ when at least two positive-weight returns and a positive variance denominator and standard deviation are available. These are ratios of weighted summaries, not expectations of the sample ASN or return/SD ratio over realized randomizations. The retained complete-path fitting and conditional-model limitations make both real-data figures exploratory retrospective comparisons, not demonstrations of prospective performance or anytime-valid error control.}

\begin{figure}[t]
\centering
\input{figs/tikz/lc_ep_four_methods_six_panel}
\caption{{Retrospective comparison with terminal randomization at scale $s=1$, using 500 paths with three future ratios reserved. All four methods share the restricted monitoring window. Randomized metrics are probability-weighted summaries.}}
\label{fig:lc_ep_4_methods}
\end{figure}

In Figure~\ref{fig:lc_ep_4_methods}, randomization increases both processes' rejection rates; randomized LC has higher rates and smaller probability-weighted ASN than randomized EP over the grid, with mixed differences in return/SD. This illustrates their interaction in the retrospective calculation.

\section{Additional Experiments}

\subsection{{Computing the Lipschitz Constant under a Gaussian Model}}
\hypertarget{GaussianL}{}
In the case of
\begin{equation*}
    E = \frac{P(X \mid H_1)}{P(X \mid H_0)} = \exp\left(\mu X - \frac{\mu^2}{2}\right),
\end{equation*}
where $X \sim \mathcal{N}(0,1)$ under $H_0$, the random variable $E$ follows a log-normal distribution. 
For $\mu \neq 0$, we have $\log E \sim \mathcal{N}(m,s^2)$, where
\begin{equation*}
    m = -\frac{\mu^2}{2}, \qquad s = |\mu| > 0.
\end{equation*}
When $\mu=0$, $E=1$ almost surely, so its distribution is degenerate and does not admit a Lebesgue density.

Let $f_E(e)$ denote the probability density function of $E$. The corresponding $L$ value is defined as
\begin{equation*}
    L = \sup_{e > 0} \left| f_E'(e) \right|,
\end{equation*}
where $f_E'(e)$ denotes the derivative of $f_E(e)$ with respect to $e$. Note that \(e\) here denotes a variable, rather than Euler's number. 

The density of $E$ admits the closed-form expression
\begin{equation*}
    f_E(e) = \frac{1}{e s \sqrt{2\pi}} \exp\!\left(-\frac{(\log e - m)^2}{2 s^2}\right), \quad e > 0,
\end{equation*}
and its derivative is given by
\begin{equation*}
    f_E'(e) = f_E(e) \left( -\frac{\log e - m}{s^2 e} - \frac{1}{e} \right).
\end{equation*}

The Lipschitz constant can be computed from two analytic candidates. Set $z=\log e$ and write
\begin{equation*}
    {D(z)=\frac{|1+(z-m)/s^2|}{s\sqrt{2\pi}}
    \exp\!\left(-2z-\frac{(z-m)^2}{2s^2}\right).}
\end{equation*}
Away from the zero $z=m-s^2$, differentiating $\log D(z)$ gives
$-2-(z-m)/s^2+1/(s^2+z-m)=0$.
Its two solutions are
\begin{equation*}
    {z_\pm=m-\frac32s^2\pm\frac{s}{2}\sqrt{s^2+4},
    \qquad L=\max\{D(z_-),D(z_+)\}.}
\end{equation*}
Both endpoint limits of $D$ are zero, so these candidates give the global maximum. The density and its derivative also vanish at the origin, establishing the boundary condition for the sequential calibration when $s>0$.

For a fully specified Gaussian null and likelihood ratio, this expression gives a deterministic model-based Lipschitz constant; numerical implementations use an upward-rounded upper bound. A finite grid maximum alone can underestimate $L$, particularly when the density's steepest slope is near zero. Fitting a lognormal model to observed ratios does not certify the true conditional null density. The fixed-sample simulations use the analytic bound and, if necessary, raise it to $4/9$; the sequential simulations use their conditional Gaussian model, with the Lipschitz constant recomputed for each batch size. The working scales in the retrospective S\&P~500 comparison remain descriptive plug-in values.

For scale, at $\alpha=0.05$ the Gaussian likelihood-ratio models with $\mu=0.5$ and $\mu=1$ give $L\approx3.19$ and $L\approx10.71$. The corresponding relative reductions from the Markov threshold are
\begin{equation*}
\frac{1/\alpha-T_\alpha^{(L)}}{1/\alpha}
=\frac{\sqrt2(1-\alpha)^{3/2}}{3\sqrt L}
\approx24.4\% \quad\text{and}\quad 13.3\%,
\end{equation*}
respectively. Thus the practically relevant gain is largest in the moderate-$L$ regime, while the correction necessarily vanishes at rate $L^{-1/2}$ as $L$ grows.

\subsection{Smooth Beta Likelihood-Ratio Experiment}
\label{app:beta-experiment}

To check that the power gain is not specific to the piecewise-linear family, we also generate null observations from three smooth Beta models and form e-values as likelihood ratios against the indicated Beta alternatives. In each configuration, $K=10{,}000$, $p_{\mathrm{null}}=0.8$, and the results average 200 independent repetitions. For $a=3,4,5$, the likelihood ratio is $E=2X/[a(a+2)(1-X)^2]$. We bound the absolute derivative of its induced null density on 10,000 rational subintervals of $X\in[0,1]$, inflate the certified bound by $10\%$, and round upward. The resulting $L$ values are $99.149160$, $88.677529$, and $85.298118$ (rounded upward here). Each model uses an independent PCG64 stream derived from seed 42, with the same observations for both methods and levels. \Cref{tab:beta-lr} reports the nominal levels $\alpha=0.05$ and $0.10$.

\begin{table}[t]
\caption{Smooth Beta likelihood-ratio experiments: empirical FDR and mean number of rejections.}
\label{tab:beta-lr}
\centering
\small
\begin{tabular}{lccrr}
\toprule
Null vs. alternative & $\alpha$ & Method & Empirical FDR & Rejections \\
\midrule
$\mathrm{Beta}(3,3)$ vs. $\mathrm{Beta}(4,1)$ & {0.05} & e-BH & {0.000516} & {42.16} \\
 & {0.05} & e-LC & {0.000484} & {44.03} \\
 & {0.10} & e-BH & {0.000961} & {81.35} \\
 & {0.10} & e-LC & {0.001113} & {85.76} \\
$\mathrm{Beta}(4,3)$ vs. $\mathrm{Beta}(5,1)$ & {0.05} & e-BH & {0.000451} & {40.75} \\
 & {0.05} & e-LC & {0.000738} & {42.95} \\
 & {0.10} & e-BH & {0.001234} & {81.22} \\
 & {0.10} & e-LC & {0.001383} & {85.25} \\
$\mathrm{Beta}(5,3)$ vs. $\mathrm{Beta}(6,1)$ & {0.05} & e-BH & {0.000187} & {40.24} \\
 & {0.05} & e-LC & {0.000185} & {42.44} \\
 & {0.10} & e-BH & {0.001063} & {77.94} \\
 & {0.10} & e-LC & {0.001132} & {82.18} \\
\bottomrule
\end{tabular}
\end{table}

Across all six comparisons, e-LC produces more discoveries while empirical FDR remains well below the nominal level. This experiment is deliberately separated from the piecewise-linear benchmark and the extremal construction used in the sharpness proof.

\subsection{Empirical FDR Validation for Multiple Testing}
\label{app:fdr-validation-multiple}
{To complement the power comparison in \Cref{subsec:exp_multiple}, we evaluate e-LC using Gaussian likelihood-ratio e-values $E=\exp(4X-8)$. Each run uses $K=10{,}000$ hypotheses: $8{,}000$ null observations from $\mathcal N(0,1)$ and $2{,}000$ alternatives from $\mathcal N(4,1)$. We evaluate 100 equally spaced levels $\alpha\in[0.02,0.10]$ on the same 20 independent repetitions (MT19937 seed 42). Empirical FDR is the mean FDP, and the shaded band is one population standard deviation across repetitions. The analytic Lipschitz constant is $L\approx7.20754\times10^{19}$, evaluated with outward-rounded arithmetic. The resulting correction is negligible: e-LC and e-BH have identical rejection sets in these repetitions.}
In this setup, e-values are generated independently across hypotheses, so this validation is aligned with the independence regime of e-LC in \Cref{prop:lipschitz_fdr}.

{As shown in \Cref{fig:empirical-fdr-multiple}, the empirical FDR curve stays below the diagonal reference $y=\alpha$ across the tested range, so no empirical FDR inflation is observed at the tested levels. This finite simulation is a diagnostic; validity follows from the theorem only when its assumptions hold.}

\subsection{Sensitivity to Underestimating the Lipschitz Constant}
\label{app:l-misspecification}

We next study misspecification in Gaussian likelihood-ratio models. Under $H_0$, $X\sim\mathcal N(0,1)$ and $E=\exp(\mu X-\mu^2/2)$, giving $L_{\mathrm{true}}\approx3.19$ for $\mu=0.5$ and $L_{\mathrm{true}}\approx10.71$ for $\mu=1$. Each run contains $K=100{,}000$ hypotheses with null proportion $0.8$, and empirical FDR is averaged over 200 repetitions. We deliberately use
\begin{equation*}
L_{\mathrm{used}}=L_{\mathrm{true}}/c
\end{equation*}
{and scan 501 geometrically spaced factors $c\in[10,105]$. For each nominal level, \Cref{tab:l-misspecification} reports the largest grid value $c$ for which empirical FDR does not exceed $\alpha$ at that value and at all smaller factors on the tested grid.}

\begin{table}[t]
\caption{Largest empirically non-inflating underestimation factor $c=L_{\mathrm{true}}/L_{\mathrm{used}}$ on the tested grid. {Original simulation code and the alternative distribution remain to be verified.}}
\label{tab:l-misspecification}
\centering
\small
\begin{tabular}{cccc}
\toprule
$\mu\backslash\alpha$ & $0.01$ & $0.10$ & $0.20$ \\
\midrule
$0.5$ & $14.3\times$ & $14.2\times$ & $14.1\times$ \\
$1.0$ & $47.9\times$ & $46.5\times$ & $45.0\times$ \\
\bottomrule
\end{tabular}
\end{table}

No empirical FDR inflation is observed up to the factors in \Cref{tab:l-misspecification}. This is a finite-grid diagnostic, not a robustness theorem: underestimating $L$ violates the sufficient condition used in our proofs, and the transition beyond the reported grid boundary is abrupt in this experiment. The table therefore quantifies behavior in these two simulated models but does not justify replacing a conservative upper bound by an empirical underestimate.

\begin{figure}[h]
    \centering
    \subfloat[Empirical FDR]{%
\begin{tikzpicture}[baseline=(current axis.south)]\begin{axis}[evalue axis,scale only axis,width=.20\textwidth,height=2.25cm,xticklabel style={text height=1.5ex,text depth=.3ex},xlabel style={text height=1.5ex,text depth=.3ex},xlabel={$\alpha$},ylabel={Empirical FDR},xmin=.02,xmax=.1,enlarge x limits=false,ymode=log,log basis y=10,xtick={.02,.06,.1},legend pos=south east]
\addplot[name path=empiricalfdru,draw=none,forget plot] coordinates {(0.02,0.002242) (0.020808,0.002223) (0.021616,0.002232) (0.022424,0.002398) (0.023232,0.002447) (0.02404,0.002433) (0.024848,0.002514) (0.025657,0.002629) (0.026465,0.002615) (0.027273,0.002793) (0.028081,0.002864) (0.028889,0.002853) (0.029697,0.002844) (0.030505,0.002863) (0.031313,0.002854) (0.032121,0.002888) (0.032929,0.002929) (0.033737,0.002922) (0.034545,0.002981) (0.035354,0.002973) (0.036162,0.003051) (0.03697,0.003043) (0.037778,0.003308) (0.038586,0.003525) (0.039394,0.003495) (0.040202,0.003488) (0.04101,0.003579) (0.041818,0.00358) (0.042626,0.00362) (0.043434,0.003637) (0.044242,0.003629) (0.045051,0.003671) (0.045859,0.003724) (0.046667,0.00379) (0.047475,0.00382) (0.048283,0.003995) (0.049091,0.003988) (0.049899,0.004062) (0.050707,0.004131) (0.051515,0.004171) (0.052323,0.004164) (0.053131,0.004282) (0.053939,0.004385) (0.054747,0.004424) (0.055556,0.004441) (0.056364,0.004471) (0.057172,0.004465) (0.05798,0.00446) (0.058788,0.004593) (0.059596,0.00465) (0.060404,0.004643) (0.061212,0.004637) (0.06202,0.004664) (0.062828,0.004792) (0.063636,0.004787) (0.064444,0.00497) (0.065253,0.005055) (0.066061,0.005111) (0.066869,0.005169) (0.067677,0.005198) (0.068485,0.005275) (0.069293,0.005293) (0.070101,0.005367) (0.070909,0.005449) (0.071717,0.005557) (0.072525,0.005706) (0.073333,0.005697) (0.074141,0.005692) (0.074949,0.005794) (0.075758,0.005884) (0.076566,0.005879) (0.077374,0.00591) (0.078182,0.005931) (0.07899,0.005972) (0.079798,0.006021) (0.080606,0.006018) (0.081414,0.006089) (0.082222,0.006209) (0.08303,0.006256) (0.083838,0.006251) (0.084646,0.006248) (0.085455,0.006387) (0.086263,0.00639) (0.087071,0.0064) (0.087879,0.006475) (0.088687,0.006495) (0.089495,0.006491) (0.090303,0.006587) (0.091111,0.006583) (0.091919,0.006579) (0.092727,0.006645) (0.093535,0.006694) (0.094343,0.006687) (0.095152,0.006728) (0.09596,0.006794) (0.096768,0.006789) (0.097576,0.006849) (0.098384,0.006891) (0.099192,0.006888) (0.1,0.006882)};
\addplot[name path=empiricalfdrl,draw=none,forget plot] coordinates {(0.02,0.000452) (0.020808,0.000527) (0.021616,0.000574) (0.022424,0.000608) (0.023232,0.000685) (0.02404,0.000683) (0.024848,0.000656) (0.025657,0.000665) (0.026465,0.000663) (0.027273,0.000747) (0.028081,0.000796) (0.028889,0.000793) (0.029697,0.00079) (0.030505,0.000827) (0.031313,0.000824) (0.032121,0.00084) (0.032929,0.000919) (0.033737,0.000918) (0.034545,0.000913) (0.035354,0.000911) (0.036162,0.000889) (0.03697,0.000885) (0.037778,0.000874) (0.038586,0.000971) (0.039394,0.001119) (0.040202,0.001116) (0.04101,0.001273) (0.041818,0.001326) (0.042626,0.001342) (0.043434,0.001377) (0.044242,0.001375) (0.045051,0.001453) (0.045859,0.001516) (0.046667,0.001504) (0.047475,0.001526) (0.048283,0.001599) (0.049091,0.001596) (0.049899,0.001638) (0.050707,0.001687) (0.051515,0.001699) (0.052323,0.001696) (0.053131,0.001818) (0.053939,0.001829) (0.054747,0.001906) (0.055556,0.001943) (0.056364,0.001967) (0.057172,0.001963) (0.05798,0.00196) (0.058788,0.002009) (0.059596,0.002064) (0.060404,0.002061) (0.061212,0.002119) (0.06202,0.002208) (0.062828,0.002194) (0.063636,0.002193) (0.064444,0.002182) (0.065253,0.002211) (0.066061,0.002269) (0.066869,0.002327) (0.067677,0.002352) (0.068485,0.002389) (0.069293,0.002425) (0.070101,0.002467) (0.070909,0.002499) (0.071717,0.002507) (0.072525,0.00253) (0.073333,0.002527) (0.074141,0.002524) (0.074949,0.002536) (0.075758,0.002558) (0.076566,0.002555) (0.077374,0.002578) (0.078182,0.002669) (0.07899,0.002682) (0.079798,0.002745) (0.080606,0.002744) (0.081414,0.002723) (0.082222,0.002777) (0.08303,0.002782) (0.083838,0.002841) (0.084646,0.00284) (0.085455,0.002877) (0.086263,0.002928) (0.087071,0.00309) (0.087879,0.003069) (0.088687,0.003101) (0.089495,0.003099) (0.090303,0.003115) (0.091111,0.003111) (0.091919,0.003109) (0.092727,0.003151) (0.093535,0.003156) (0.094343,0.003153) (0.095152,0.003226) (0.09596,0.003272) (0.096768,0.003269) (0.097576,0.003257) (0.098384,0.003327) (0.099192,0.003324) (0.1,0.003322)};
\addplot[plotblue,fill opacity=.13,forget plot] fill between[of=empiricalfdru and empiricalfdrl];
\addplot[plotblue,solid,thick,mark=triangle*,mark size=1.3pt,mark repeat=20,mark options={solid,fill=white}] coordinates {(0.02,0.001347) (0.020808,0.001375) (0.021616,0.001403) (0.022424,0.001503) (0.023232,0.001566) (0.02404,0.001558) (0.024848,0.001585) (0.025657,0.001647) (0.026465,0.001639) (0.027273,0.00177) (0.028081,0.00183) (0.028889,0.001823) (0.029697,0.001817) (0.030505,0.001845) (0.031313,0.001839) (0.032121,0.001864) (0.032929,0.001924) (0.033737,0.00192) (0.034545,0.001947) (0.035354,0.001942) (0.036162,0.00197) (0.03697,0.001964) (0.037778,0.002091) (0.038586,0.002248) (0.039394,0.002307) (0.040202,0.002302) (0.04101,0.002426) (0.041818,0.002453) (0.042626,0.002481) (0.043434,0.002507) (0.044242,0.002502) (0.045051,0.002562) (0.045859,0.00262) (0.046667,0.002647) (0.047475,0.002673) (0.048283,0.002797) (0.049091,0.002792) (0.049899,0.00285) (0.050707,0.002909) (0.051515,0.002935) (0.052323,0.00293) (0.053131,0.00305) (0.053939,0.003107) (0.054747,0.003165) (0.055556,0.003192) (0.056364,0.003219) (0.057172,0.003214) (0.05798,0.00321) (0.058788,0.003301) (0.059596,0.003357) (0.060404,0.003352) (0.061212,0.003378) (0.06202,0.003436) (0.062828,0.003493) (0.063636,0.00349) (0.064444,0.003576) (0.065253,0.003633) (0.066061,0.00369) (0.066869,0.003748) (0.067677,0.003775) (0.068485,0.003832) (0.069293,0.003859) (0.070101,0.003917) (0.070909,0.003974) (0.071717,0.004032) (0.072525,0.004118) (0.073333,0.004112) (0.074141,0.004108) (0.074949,0.004165) (0.075758,0.004221) (0.076566,0.004217) (0.077374,0.004244) (0.078182,0.0043) (0.07899,0.004327) (0.079798,0.004383) (0.080606,0.004381) (0.081414,0.004406) (0.082222,0.004493) (0.08303,0.004519) (0.083838,0.004546) (0.084646,0.004544) (0.085455,0.004632) (0.086263,0.004659) (0.087071,0.004745) (0.087879,0.004772) (0.088687,0.004798) (0.089495,0.004795) (0.090303,0.004851) (0.091111,0.004847) (0.091919,0.004844) (0.092727,0.004898) (0.093535,0.004925) (0.094343,0.00492) (0.095152,0.004977) (0.09596,0.005033) (0.096768,0.005029) (0.097576,0.005053) (0.098384,0.005109) (0.099192,0.005106) (0.1,0.005102)};
\addlegendentry{e-LC}
\addplot[plotred,solid,thick] coordinates {(0.02,0.02) (0.020808,0.020808) (0.021616,0.021616) (0.022424,0.022424) (0.023232,0.023232) (0.02404,0.02404) (0.024848,0.024848) (0.025657,0.025657) (0.026465,0.026465) (0.027273,0.027273) (0.028081,0.028081) (0.028889,0.028889) (0.029697,0.029697) (0.030505,0.030505) (0.031313,0.031313) (0.032121,0.032121) (0.032929,0.032929) (0.033737,0.033737) (0.034545,0.034545) (0.035354,0.035354) (0.036162,0.036162) (0.03697,0.03697) (0.037778,0.037778) (0.038586,0.038586) (0.039394,0.039394) (0.040202,0.040202) (0.04101,0.04101) (0.041818,0.041818) (0.042626,0.042626) (0.043434,0.043434) (0.044242,0.044242) (0.045051,0.045051) (0.045859,0.045859) (0.046667,0.046667) (0.047475,0.047475) (0.048283,0.048283) (0.049091,0.049091) (0.049899,0.049899) (0.050707,0.050707) (0.051515,0.051515) (0.052323,0.052323) (0.053131,0.053131) (0.053939,0.053939) (0.054747,0.054747) (0.055556,0.055556) (0.056364,0.056364) (0.057172,0.057172) (0.05798,0.05798) (0.058788,0.058788) (0.059596,0.059596) (0.060404,0.060404) (0.061212,0.061212) (0.06202,0.06202) (0.062828,0.062828) (0.063636,0.063636) (0.064444,0.064444) (0.065253,0.065253) (0.066061,0.066061) (0.066869,0.066869) (0.067677,0.067677) (0.068485,0.068485) (0.069293,0.069293) (0.070101,0.070101) (0.070909,0.070909) (0.071717,0.071717) (0.072525,0.072525) (0.073333,0.073333) (0.074141,0.074141) (0.074949,0.074949) (0.075758,0.075758) (0.076566,0.076566) (0.077374,0.077374) (0.078182,0.078182) (0.07899,0.07899) (0.079798,0.079798) (0.080606,0.080606) (0.081414,0.081414) (0.082222,0.082222) (0.08303,0.08303) (0.083838,0.083838) (0.084646,0.084646) (0.085455,0.085455) (0.086263,0.086263) (0.087071,0.087071) (0.087879,0.087879) (0.088687,0.088687) (0.089495,0.089495) (0.090303,0.090303) (0.091111,0.091111) (0.091919,0.091919) (0.092727,0.092727) (0.093535,0.093535) (0.094343,0.094343) (0.095152,0.095152) (0.09596,0.09596) (0.096768,0.096768) (0.097576,0.097576) (0.098384,0.098384) (0.099192,0.099192) (0.1,0.1)};
\addlegendentry{$y=\alpha$}
\end{axis}\end{tikzpicture}%
        \label{fig:empirical-fdr-multiple}%
    }
    \hfill
    \subfloat[Effect of batch size]{%
\begin{tikzpicture}[baseline=(current axis.south)]\begin{axis}[evalue axis,scale only axis,width=.20\textwidth,height=2.25cm,xticklabel style={text height=1.5ex,text depth=.3ex},xlabel style={text height=1.5ex,text depth=.3ex},xlabel={Batch size},xmin=0,xmax=1000,enlarge x limits=false,ymax=5,xtick={0,500,1000},ylabel={Mean of $M_{\max}$},legend pos=north west]
\addplot[plotblue,solid,thick,mark=triangle*,mark size=1.3pt,mark repeat=4,mark options={solid,fill=white}] coordinates {(1,1.01660517748) (2,1.02229248651) (3,1.03748816828) (5,1.05882533581) (8,1.08550518511) (12,1.11862147295) (20,1.15178559655) (30,1.17864622625) (45,1.20875288323) (60,1.26004070862) (80,1.34028912288) (100,1.44206683568) (150,1.54697510128) (200,1.64936692659) (300,1.81114215357) (500,2.05462604563) (800,2.2406632536) (1000,2.29360025222)};
\addlegendentry{$\alpha=0.1$}
\addplot[plotteal,solid,thick,mark=square*,mark size=1.3pt,mark repeat=4,mark options={solid,fill=white}] coordinates {(1,1.01023267585) (2,1.01766437986) (3,1.03478727877) (5,1.05219699105) (8,1.0687596612) (12,1.09646117409) (20,1.15283979841) (30,1.23609689867) (45,1.32094044424) (60,1.39272190894) (80,1.4473656818) (100,1.49966554865) (150,1.59943486991) (200,1.76461691921) (300,1.95099096363) (500,2.13806391656) (800,2.30624436412) (1000,2.37343943588)};
\addlegendentry{$\alpha=0.05$}
\addplot[plotred,solid,thick,mark=diamond*,mark size=1.3pt,mark repeat=4,mark options={solid,fill=white}] coordinates {(1,1.01929870166) (2,1.02660917112) (3,1.04383207015) (5,1.06539865616) (8,1.09266087952) (12,1.1347375682) (20,1.18602310293) (30,1.22166578809) (45,1.2424433693) (60,1.27119548846) (80,1.31245459932) (100,1.35382541185) (150,1.39254703198) (200,1.46854514735) (300,1.69533560276) (500,2.16888318576) (800,2.76244325209) (1000,3.0547405171)};
\addlegendentry{$\alpha=0.01$}
\end{axis}\end{tikzpicture}%
        \label{fig:batch_size_mmax}%
    }
    \hfill
    \subfloat[Sample size]{%
        \input{figs/tikz/martingale_elc_0_2}%
        \label{fig:effectiveness_plot}%
    }
    \caption{{Additional simulations. (a) Empirical FDR; shading is one standard deviation and the red line is $y=\alpha$. (b) LC null crossing probabilities by batch size, with pointwise 95\% Wilson intervals and nominal reference levels. (c) Restricted stopping samples $\min(T,H)$ for LC and EP on a logarithmic y-axis, $H=50{,}000$; shading is one Monte Carlo standard error.}}
    \label{fig:additional-experiments}
\end{figure}

\subsection{{Simulation of Sequential Inference}}\label{app_exp}
{We evaluate the process in \Cref{sec:applications} on independent Gaussian observations. The likelihood-ratio design parameter is fixed at $\theta=0.1$, distinct from the data-generating mean $\mu$. For a fresh, disjoint block of $b$ observations, the input is}
\begin{equation*}
 {E=\exp\!\left(\theta\sum_{i=1}^b X_i-b\theta^2/2\right).}
\end{equation*}
{Under the null, $X_i\sim\mathcal N(0,1)$, so $\log E\sim\mathcal N(-b\theta^2/2,b\theta^2)$. We recompute the analytic Lipschitz constant for each $b$, use an upward numerical guard of $10^{-12}$ in relative terms, and set $\lambda=0.5$, $\tau=0.5/\sqrt{2L}$, and $\eta=(1-\tau+L\tau^3/3)^{-1}$. The numerical implementation also checks the exact lognormal expectation of each update against one. Both EP and LC multiply fresh block factors along one continuing path, starting from $M_0=1$.}

\paragraph{{Null Crossing Probabilities.}}
{For each of the 18 batch sizes $b$ from 1 to 1000 shown in \Cref{fig:batch_size_mmax}, we simulate $5{,}000$ paths with a common budget of $36{,}000$ raw observations, giving $36{,}000/b$ updates. We estimate $\Pr(\max_{0\le t\le36{,}000/b}M_t\ge1/\alpha)$ at $\alpha\in\{0.1,0.05,0.01\}$. In \Cref{fig:batch_size_mmax}, LC crossing estimates stay below their nominal levels; bands are pointwise 95\% Wilson intervals across independent paths, not simultaneous confidence bands. This finite-horizon check complements the conditional-model argument, which supplies the anytime-valid guarantee. The expectation of a running maximum need not be at most one, even for a valid nonnegative supermartingale, so it is not used as a validity criterion.}

\paragraph{{Stopping Samples and Censoring.}}
{For $b=1$, generate $X_t\sim\mathcal N(\mu,1)$ at 200 equally spaced means $\mu\in[0,2]$, retaining $\theta=0.1$. Each mean uses $1{,}000$ paired EP/LC paths, $\alpha=0.1$, and horizon $H=50{,}000$ observations. We record first crossings and treat paths without a crossing by $H$ as censored; each contributes $H$ to $\min(T,H)$. \Cref{fig:effectiveness_plot} reports the mean of $\min(T,H)$ over all paths, with one Monte Carlo standard error; it is neither an uncensored expected stopping time nor an average over successful paths only. The simulation uses seed 20260926 with separate random-number streams across design settings; within each setting, EP and LC are evaluated on the same simulated paths.}

{At $\mu=0$, the LR inputs are nondegenerate because $\theta$ remains $0.1$: EP and LC cross on $99/1000$ and $17/1000$ paths, with restricted means $45{,}229.0$ and $49{,}165.6$. At $\mu=2$, both cross on all paths, with means $22.86$ and $21.73$. LC has a smaller restricted mean at 152 of the 200 tested means, but is slower at the remaining 48. Thus truncation can delay weak-signal detection; the calibration does not imply uniform sample-efficiency dominance. Per-path results and crossing probabilities are retained with the experiment code.}

\end{document}